\begin{document}
\title{Free Resolutions and Generalized Hamming Weights of binary linear codes\thanks{This work was partially supported by the Spanish MICINN PID2019-105896GB-I00 and MASCA and MACACO (ULL Research Projects). Third author was supported in part by Grant PGC2018-096446-B-C21 funded by
		MCIN/AEI/10.13039/501100011033 and by “ERDF A way of making Europe”. Fourth author was supported by SEGIB-Fundaci\'on Carolina Grant and SNI M\'exico.}}
%
%
\author{I. Garc\'ia-Marco\inst{1}  \and
I. M\'arquez-Corbella\inst{1}  \and
E. Mart\'inez-Moro\inst{2}\and Y. Pitones\inst{3}}
\authorrunning{I. Garc\'ia-Marco et al.}
%
\institute{Departamento de Matem\'aticas, Estad\'istica e I.O., Universidad de La Laguna,  Tenerife, Spain,
\email{iggarcia,imarquec@ull.edu.es} \and {Institute of Mathematics, University of Valladolid, Castilla, Spain},
\email{edgar.martinez@uva.es}\and {Universidad Aut\'onoma Metropolitana, M\'exico},
\email{ypitones@xanum.uam.mx}
}
\maketitle              
\begin{abstract}
In this work, we explore the relationship between free resolution of some monomial ideals and Generalized Hamming Weights (GHWs) of binary codes. More precisely, we look for a structure smaller than the set of codewords of minimal support that provides us some information about the GHWs. We prove that the first and second generalized Hamming weight of a binary linear code can be computed (by means of a graded free resolution) from a set of monomials associated to a binomial ideal related with the code. Moreover, the remaining weights are bounded by the Betti numbers for that set. 

\keywords{Generalized Hamming Weight  \and Graded free resolution \and Second distance \and Binary code.}
\end{abstract}
\section{Introduction}

The study of the Generalized Hamming Weights (GHWs) has been motivated by several applications in cryptography \cite{W:91} and they  characterize the performance of a linear code when it is used for a given channel. There are few families of codes for which it is known the complete generalized weight hierarchy as for example: first-order Reed-Muller codes, binary Reed-Muller codes, Hamming code and its dual, Extended Hamming codes, Golay code, see \cite{W:91}. On the other side,  there has been an extensive research on GHWs and the second distance in particular for some  classes of codes, see for example \cite{GHW1,GHW5,GHW9,GHW8,GHW2,GHW6,GHW10,GHW7,GHW4,GHW3} and the references therein. However, for the general case of a  linear code, few properties are known. 

In their seminal paper \cite{JV:2013}, Johnsen and Verdure showed how the GHWs of a linear code could be computed from a free resolution of a monomial ideal associated to the set of codewords of minimal support of the code provided we know this last set. That paper has produced a great avenue of research, see for example \cite{JVrel6,GS:2020,JV:2014,JVrel7,JVrel5,JVrel4,JVrel3,JVrel2,JVrel1}. 

In the present work,  we explore if one can find a set (smaller than  the set of all codewords of minimal support) that provides us some information on the GHWs in the case of binary codes. The selected set of codewords is the so called Gr\"obner test set related to the binomial ideal associated to a code defined in \cite{BBFM:2008}. In that paper it was proven that one can decode using that set and that the minimal distance of the code (thus the first GHW) was reflected on it. Thus, somehow some of the relevant information of the code lies in it. Moreover, in \cite{BBFM:2008} it was also shown how to compute the  Gr\"obner test set avoiding some of the most common disadvantages when one uses Gr\"obner basis. In this paper we will show how one can compute also the  second GHW of a binary linear code   from that binomial ideal associated with  the code without the need of computing the complete set of codewords of minimal support of the code as in \cite{JV:2013}.  Moreover, in Theorem~\ref{teo:gordo} we bound the remaining GHWs with the resolution of a monomial ideal associated to this new set.  

The structure  of the contribution is simple, in Section~\ref{sec:intro} we revise some results on GHW, free resolutions and the ideal associated with a code. Section~\ref{sec:conjfalse} shows briefly the false conjecture and experiments that drive us to this study. 
In Section~\ref{sec:main} we show our main result and finally in Section~\ref{sec:last} we show some future lines of research and some conjectures related to the topic that we hope will be helpful for future research.

\section{GHW and minimal supports}\label{sec:intro}

Let $\mathbb F_q$ be a finite field with $q$ elements. Given two vectors $\mathbf x = (x_1, \ldots, x_n)$ and $\mathbf y = (y_1, \ldots, y_n)\in \mathbb F_q^n$, the \emph{Hamming distance} between $\mathbf x$ and $\mathbf y$ is defined as
$$d_H(\mathbf x, \mathbf y) = |\{ i \mid x_i \neq y_i\}|,$$
where $|\cdot|$ denotes the cardinality of the set. The \emph{Hamming weight} of $\mathbf x$ is given by $\mathrm{w}_H(\mathbf x) =d_H (\mathbf x, \mathbf 0)$, where $\mathbf 0$ denotes the zero vector in $\mathbb F_q^n$. The \emph{support} of $\mathbf x$ is the set $\mathrm{supp}(\mathbf x) = \{ i \mid x_i \neq 0\}$. 
A linear subspace $\mathcal C$ in $\mathbb F_q^n$ is called a \emph{linear code}. The elements of $\mathcal C$ are called \emph{codewords}. The \emph{basic parameters} of $\mathcal C$ are its length, its  dimension and its minimum distance, which are denoted by $n(\mathcal C)$, $k(\mathcal C)$ and $\delta(\mathcal C)$, respectively. In this case, we call $\mathcal C$ an $[n(\mathcal C), k(\mathcal C), \delta(\mathcal C)]_q$ linear code. We define a \emph{generator matrix} of $\mathcal C$ to be a  matrix $G$ over $\mathbb F_q$ of size $k(\mathcal C) \times n(\mathcal C)$ whose row vectors span $\mathcal C$, while a \emph{parity check matrix} of $\mathcal C$ is a matrix $H$ over $\mathbb F_q$ of size $(n(\mathcal C) - k(\mathcal C)) \times n(\mathcal C)$ whose null space is $\mathcal C$.

{
\begin{definition}
Let $\mathcal C$ be a linear code, we will say that a codeword $\mathbf m$ has minimal support if it is non-zero and $\mathrm{supp}(\mathbf m)$ is not contained in the support of any other codewords. We will denote by $\mathcal M_{\mathcal C}$ the set of codewords of minimal support of $\mathcal C$.
\end{definition}
}

{Note that computing a set of codewords of minimal support is a hard problem for a general linear code, in fact it implies finding the minimum distance of the code and, thus, it is harder that the problem of maximum-likelihood complete decoding, see \cite{Hard,B98}.}

\begin{definition}
	Let $\mathcal C$ be a linear code and $D$ be a subcode of $\mathcal C$, we define \emph{the support of $D$}, denoted $\mathrm{supp}(D)$ as the set of not-always-zero bit positions of $D$, i.e.,
	$$\mathrm{supp}(D) = \{ i \mid \exists \mathbf c \in D \hbox{ with } c_i \neq 0\}.$$
\end{definition}

It is clear that  if $D$ is a one-dimensional subcode then, the support of $D$ is equal to the Hamming weight of any of its   nonzero codewords, i.e.  $d_1(\mathcal C) = \delta (\mathcal C)$. Based on this idea, the \emph{$h$-th generalized Hamming weight}  of $\mathcal C$, denoted $d_h(\mathcal C)$, is the size of the smallest support of an $h$-dimensional subcode of $\mathcal C$ with $h=1, 2, \ldots, k(\mathcal C)$. That is, if $D_h$ is the set of all linear subspaces of the linear code $\mathcal C$ of dimension $h$, then
$$d_h (\mathcal C) = \min \{ |{\rm supp} (E)| \mid E \in D_h\}.$$
Some basic facts on the Generalized Hamming Weights (GHW) are provided in the following proposition.
\begin{proposition}[\cite{W:91}]
	Let $\mathcal C$ be a linear code. Then:
	\begin{enumerate}
		\item $1\leq d_1(\mathcal C) < d_2(\mathcal C) < \ldots < d_{k(\mathcal C)}(\mathcal C) \leq n(\mathcal C)$
		\item  {(Generalized Singleton Bound)} $d_h(\mathcal C) \leq n(\mathcal C) - k(\mathcal C) + h$.
	\end{enumerate}
\end{proposition}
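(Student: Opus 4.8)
The plan is to deduce both statements from a single shortening lemma for subcodes, obtaining the generalized Singleton bound as a purely combinatorial consequence of the strict monotonicity asserted in part~(1).

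First I would dispose of the two boundary inequalities in part~(1). The bound $1 \leq d_1(\mathcal C)$ is immediate, since a one-dimensional subcode is spanned by a single nonzero codeword, whose support is nonempty. The bound $d_{k(\mathcal C)}(\mathcal C) \leq n(\mathcal C)$ is equally direct: taking $E = \mathcal C$ in the definition of $d_h$ gives $d_{k(\mathcal C)}(\mathcal C) \leq |\mathrm{supp}(\mathcal C)| \leq n(\mathcal C)$.

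The technical heart is a shortening lemma, which I would state and prove as follows. Let $D \subseteq \mathcal C$ be a subcode of dimension $h \geq 1$ and fix a coordinate $i \in \mathrm{supp}(D)$. The evaluation map $\varphi_i : D \to \mathbb F_q$, $\varphi_i(\mathbf c) = c_i$, is a linear functional; because $i \in \mathrm{supp}(D)$ there exists $\mathbf c \in D$ with $c_i \neq 0$, so $\varphi_i$ is surjective and its kernel $D(i) = \{\mathbf c \in D : c_i = 0\}$ has dimension exactly $h-1$. Moreover every codeword of $D(i)$ vanishes at position $i$ and has support inside $\mathrm{supp}(D)$, whence $\mathrm{supp}(D(i)) \subseteq \mathrm{supp}(D) \setminus \{i\}$ and $|\mathrm{supp}(D(i))| \leq |\mathrm{supp}(D)| - 1$.

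With this lemma in hand, the strict monotonicity $d_{h-1}(\mathcal C) < d_h(\mathcal C)$ for $2 \leq h \leq k(\mathcal C)$ falls out immediately: choosing an $h$-dimensional subcode $D$ that attains $d_h(\mathcal C) = |\mathrm{supp}(D)|$ and any $i \in \mathrm{supp}(D)$, the lemma produces an $(h-1)$-dimensional subcode $D(i)$ with $d_{h-1}(\mathcal C) \leq |\mathrm{supp}(D(i))| \leq |\mathrm{supp}(D)| - 1 = d_h(\mathcal C) - 1$. This is the only place where genuine care is required, and I expect it to be the main obstacle—the crucial point being that shortening at a coordinate in the support lowers the dimension by exactly one while strictly shrinking the support. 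Finally, part~(2) is combinatorial: by part~(1) the integers $d_h(\mathcal C) < d_{h+1}(\mathcal C) < \cdots < d_{k(\mathcal C)}(\mathcal C)$ are strictly increasing, so $d_{k(\mathcal C)}(\mathcal C) \geq d_h(\mathcal C) + (k(\mathcal C) - h)$, and combining this with $d_{k(\mathcal C)}(\mathcal C) \leq n(\mathcal C)$ yields $d_h(\mathcal C) \leq n(\mathcal C) - k(\mathcal C) + h$.
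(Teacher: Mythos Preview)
Your proof is correct and follows the classical argument due to Wei. Note, however, that the paper does not supply its own proof of this proposition: it is stated as a known result and attributed directly to \cite{W:91}, so there is no in-paper argument to compare against. Your shortening lemma and the combinatorial derivation of the generalized Singleton bound from strict monotonicity are precisely the standard route found in the original reference.
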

From now on we assume that   $\mathcal C$ is a nondegenerate code, that is,   $d_{k(\mathcal C)}(\mathcal C)~=~n(\mathcal C)$.

The GHWs $d_1(\mathcal C)$, \ldots, $d_{k(\mathcal C)}(\mathcal C)$ are completely determined by the underlying linear matroid structure of the code in a non-trivial manner, and the way to obtain them is not efficient as it was already pointed in \cite{JV:2013}. Of course, as pointed before, calculating $d_1(\mathcal C)$  is equivalent to the problem of complete decoding linear codes~\cite{B98}, hence  one can not expect a computational efficient approach.
From now on, given a positive integer $\ell,$ we define $\left[\ell\right]=\left\{1,\ldots,\ell\right\}$ and $\left[\ell\right]_0=\left\{0,\ldots,\ell\right\}$. 

\begin{definition}
Let $\mathcal C$ be a $[n,k]_q$ code and let $H$ be a parity check matrix of $\mathcal C$. Let $H_i$ denote the $i$-th column of $H$ and define the simplicial complex
$$\Delta= \left\{ \sigma \in 2^{\left[n\right]} \mid \{ H_i \mid i \in \sigma\} \hbox{ is linearly independent over } \mathbb F_q\right\}.$$
Then, the pair $\mathcal M= (\left[n\right], \Delta)$ is the linear \emph{matroid associated with the code} $\mathcal C$. The collection $\Delta$ of subsets of $\left[n\right]$ are called \emph{independent sets} of this matroid. A subset of $\left[n\right]$ that does not belong to $\Delta$ is called \emph{dependent set}. Minimal dependent subsets of $\left[n\right]$ are known as \emph{circuits} of $\mathcal M$. A set is said to be a \emph{cycle} if it is a disjoint union of circuits.
\end{definition}
We refer to \cite{ES:2005} for a brief introduction on the theory of simplicial complexes, and to \cite{OX} for a thorough study of matroids. 



\begin{definition}
Let $\mathbb K$ be any field. We denote by $I_{\Delta}$ the ideal in  the polynomial ring $R = \mathbb K [X_1, \ldots, X_n]$ over a field $\mathbb K$ generated by all square-free monomials supported on elements that are not in $\Delta$, i.e.
$$\prod_{i \in \tau} X_i \hbox{ with } \tau \in 2^{\left[n\right]} \setminus \Delta$$
That is, $I_{\Delta}$ is the ideal generated by monomials supported on the circuits of $\mathcal M$, or equivalently in a matroid associated with a code, supported on codewords of minimal support of $\mathcal C$.
\begin{equation}
I_{\Delta} = \left\langle \prod_{i \in \mathrm{supp}(\mathbf c)} X_i \mid \mathbf c \in \mathcal M_{\mathcal C}\right\rangle.
\label{ideal-matroid}
\end{equation}
\end{definition}
The quotient $R_{\Delta} = R / I_{\Delta}$ is called the \emph{Stanley-Reisner} ring associated with $\Delta$. $R_{\Delta}$ is a finitely generated standard graded $\mathbb K$-algebra of dimension $n(\mathcal C)- k(\mathcal C)$. Thus, it has a minimal graded free resolution. Moreover, since the generators of $I_\Delta$ are supported in the set of circuits of a matroid, by \cite{B:1992}, one has that $\Delta$ is shellable and this implies that $R_{\Delta}$ is Cohen-Macaulay. So, by the Auslander-Buchsbaum formula, the projective dimension of $R_{\Delta}$ (i.e., the length of any minimal graded free resolution of $R_{\Delta}$) is $k(\mathcal C)$ and it looks like
\begin{equation}
\label{MFR-eq}
\begin{array}{ccccccccccccccc}
0 & \longrightarrow & F_{k(\mathcal C)} & \longrightarrow & F_{k(\mathcal C)-1} & \longrightarrow & \cdots & \longrightarrow & F_1 & \longrightarrow & F_0 & \longrightarrow & R_{\Delta} & \longrightarrow 0
\end{array}
\end{equation}
where $F_0 = R$ and each $F_i$ is a graded free $R$-module of the form
$$F_i = \bigoplus_{j \in \mathbb N} R(-j)^{\beta_{i,j}} \hbox{ for } i \in \left[k(\mathcal C)\right]_0.$$
We will refer to Equation~\eqref{MFR-eq} as a graded minimal free resolution of $\mathcal C$.
The nonnegative integers $\beta_{i,j}$ are called \emph{Betti numbers} of $\mathcal C$ and they depend only on $\mathcal C$ and not on the choice of the parity check matrix $H$, or the minimal free resolution of $R_{\Delta}$ or  even the chosen  field $\mathbb K$ (this is because $I_\Delta$ comes from the set of circuits of a matroid, see \cite{B:1992}). 

In \cite{JV:2013}, Johnsen and Verdure described the GHWs of a linear code $\mathcal C$ in terms of the shifts of the minimal graded free resolution of a Stanley-Reisner ideal $I_\Delta$ associated to $\mathcal C$. One of the main disadvantages of this method is that the generators of $I_\Delta$ correspond to the supports of all minimal support of $\mathcal C$. In general, the whole set of codewords of minimal support can be huge and computationally expensive to obtain and, in many cases, computing a minimal graded free resolution of an ideal with that many generators is unaffordable. More precisely, their result was:

\begin{theorem} \cite[Theorem 2]{JV:2013}
\label{Verdure-Teo}
Let $\mathcal C$ be a $q$-ary linear code. Then,
$$d_i(\mathcal C) = \min \{ j \mid \beta_{i,j} \neq 0\} \hbox{ for } j \in \left[k(\mathcal C)\right]$$
\end{theorem}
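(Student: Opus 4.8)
The plan is to translate both sides of the claimed identity into the combinatorics of the matroid $\mathcal M$, and then to match them through Hochster's formula for the graded Betti numbers of $R_\Delta$. Throughout, for $\sigma\subseteq[n]$ write $r(\sigma)$ for the rank of the columns $\{H_i : i\in\sigma\}$ and $\nu(\sigma)=|\sigma|-r(\sigma)$ for the corresponding nullity.

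First I would recast the generalized Hamming weights in this language. The shortened subcode $\mathcal C(\sigma)=\{\mathbf c\in\mathcal C : \mathrm{supp}(\mathbf c)\subseteq\sigma\}$ has dimension exactly $\nu(\sigma)$, since a vector supported on $\sigma$ lies in $\mathcal C$ precisely when its $\sigma$-coordinates give a linear dependence among $\{H_i : i\in\sigma\}$. A two-sided inequality then yields $d_i(\mathcal C)=\min\{|\sigma| : \nu(\sigma)\ge i\}$: any $i$-dimensional subcode $E$ has $i\le\dim\mathcal C(\mathrm{supp}(E))=\nu(\mathrm{supp}(E))$, while conversely any $\sigma$ with $\nu(\sigma)\ge i$ houses an $i$-dimensional subcode supported inside it. Among the minimizers one may always take $\sigma$ with $\nu(\sigma)=i$ and with no coloop in $\mathcal M|_\sigma$, equivalently with every element lying on a circuit contained in $\sigma$: deleting a coloop leaves the nullity unchanged while lowering $|\sigma|$, so a set of least cardinality is coloop-free. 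Hence
$$d_i(\mathcal C)=\min\{|\sigma| : \mathcal M|_\sigma\text{ is coloop-free and }\nu(\sigma)=i\}.$$

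Next I would evaluate $\min\{j : \beta_{i,j}\neq 0\}$ by Hochster's formula
$$\beta_{i,j}=\sum_{W\subseteq[n],\,|W|=j}\dim_{\mathbb K}\widetilde H_{\,j-i-1}(\Delta_W;\mathbb K),$$
where $\Delta_W=\{F\in\Delta : F\subseteq W\}$ is the induced subcomplex. The decisive structural input is that $\Delta_W$ is the independence complex of the restricted matroid $\mathcal M|_W$, which is pure of dimension $r(W)-1$ and shellable by the same result of \cite{B:1992} already invoked for the Cohen--Macaulayness of $R_\Delta$; hence its reduced homology is concentrated in the single top degree $r(W)-1$. Moreover this top homology is nonzero if and only if $\mathcal M|_W$ has no coloop: a coloop $e$ lies in every basis, so adjoining $e$ to any face keeps it a face and exhibits $\Delta_W$ as a cone with apex $e$, hence contractible, while in the coloop-free case the nonvanishing is the wedge-of-spheres conclusion for matroid complexes of \cite{B:1992}.

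Reading this off, the summand indexed by $W$ can contribute only when $j-i-1=r(W)-1$, that is when $i=|W|-r(W)=\nu(W)$, so $\beta_{i,j}\neq 0$ exactly when some $W$ with $|W|=j$ is coloop-free of nullity $i$. Taking the minimum over $j$ gives
$$\min\{j : \beta_{i,j}\neq 0\}=\min\{|W| : \mathcal M|_W\text{ is coloop-free and }\nu(W)=i\},$$
which is precisely the matroid expression for $d_i(\mathcal C)$ obtained above, and the proof is complete. I expect the crux to be the homological input of the third paragraph --- that the reduced homology of a matroid independence complex sits in a single top degree and vanishes exactly when a coloop is present --- together with keeping the degree bookkeeping in Hochster's formula aligned so that the homological index $i$ matches the nullity; with those in hand the argument reduces to identifying two minima.
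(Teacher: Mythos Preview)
The paper does not supply its own proof of this theorem: it is quoted verbatim as \cite[Theorem~2]{JV:2013} and then used as a black box (for instance in Examples~\ref{toy-example1}, \ref{ex:ej1}, \ref{ex:contraej} and in the proof of Theorem~\ref{teo:gordo}). There is therefore nothing in the present paper to compare your argument against.

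That said, your argument is correct and is essentially the proof given by Johnsen and Verdure. The two ingredients you isolate are exactly the right ones: (i) Hochster's formula, which for the Stanley--Reisner ring of the independence complex localises the multigraded Betti numbers on induced subcomplexes $\Delta_W$; and (ii) the homological consequence of Bj\"orner's shellability result \cite{B:1992}, namely that $\Delta_W$---being itself the independence complex of the restriction $\mathcal M|_W$---has reduced homology only in the top degree $r(W)-1$, and that this top homology is nonzero precisely when $\mathcal M|_W$ is coloop-free. Your bookkeeping is right: the index shift in Hochster forces $i=|W|-r(W)=\nu(W)$, so $\beta_{i,j}\neq 0$ iff some coloop-free $W$ of size $j$ has nullity $i$.

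The matroidal reformulation of $d_i(\mathcal C)$ is also clean; the one place worth making explicit is why a minimiser $\sigma$ with $\nu(\sigma)\ge i$ must have $\nu(\sigma)=i$ exactly: since deleting a non-coloop drops the nullity by one and deleting a coloop leaves it fixed, minimality of $|\sigma|$ forces $\sigma$ to be coloop-free and every deletion to drop the nullity below $i$, hence $\nu(\sigma)=i$. You essentially say this, but the implication ``$\nu(\sigma)=i$'' is the step a reader might want spelled out. With that, both minima coincide and the theorem follows.
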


{
\begin{example}[Toy example]
\label{toy-example1}
Let $\mathcal C$ be the binary non-degenerate $[6,3]$ code with generator matrix
$$G=\left( \begin{array}{cccccc}
1 & 0 & 0 & 0 & 0 & 1\\
0 & 1 & 1 & 0 & 1 & 0 \\
0 & 0 & 0 & 1 & 1 & 1
\end{array}\right) \in \mathbb F_2^{3\times 6}$$
One can check that the set of codewords of minimal support is
$$\mathcal M_{\mathcal C} = \left\{ \begin{array}{ccc}
w_1= 100001, & w_2=100110, & w_3=011010, \\
w_4= 000111, & w_5 = 111100, & w_6= 011101 
\end{array}
\right\}$$
Its Stanley-Reisner ring is $R_{\Delta} = R / I_{\Delta}$ where $R=\mathbb F_2[x_1, \ldots, x_6]$ and the ideal $I_{\Delta}$ is generated by the monomials associated to $\mathcal M_{\mathcal C}$. If we compute a graded minimal free resolution of $R_{\Delta}$ we get the following Betti diagram:
\[
\begin{array}{c|cccc}
& 0 & 1 & 2 & 3  \\ \hline 
0 &  1 & 0 & 0 & 0 \\
1 & 0 & 1 & 0 & 0 \\
2 & 0 & 3 & 2 & 0\\
3 & 0 & 2 & 7 & 4 \\
\end{array}
\]
where the entry of the row indexed by $i$ and column indexed by $j$ indicates the value $\beta_{j,i+j}$ (for example $\beta_{3,5} = 7$).
Hence, by Theorem \ref{Verdure-Teo}, we have $d_1(\mathcal C) = 2$, $d_2(\mathcal C) = 4$ and $d_3(\mathcal C) = 6$.
\end{example}
}

This result shows that the determination of the Betti numbers of the monomial ideal  $I_{\Delta}$  related to a code completely determine the weight hierarchy. However, as mentioned before, this is usually a hard problem \cite{B98} except in some special cases. 
For example, Johnsen and Verdure in \cite{JV:2014} explicitly determine the Betti Numbers for MDS codes, since the minimal free resolution of these codes is linear. 
Moreover, in \cite{JV:2014} the authors prove that the resolution of the first order Reed-Muller code is pure. And a similar result can be deduced for constant weight codes \cite{GS:2020}. Thus, simplex codes or dual Hamming codes, which are constant weight codes also have pure resolution, although not necessarily linear.

\begin{remark}The resolution \eqref{MFR-eq} is said to be \emph{pure} of type $(d_0, \ldots, d_{k(\mathcal C)})$ if for each $i\in \left[k\right]_0$, the Betti number $\beta_{i,j}$ is nonzero if and only if $j= d_i$. If, in addition $d_1, \ldots, d_k$ are consecutive, then the resolution is said to be \emph{linear}. 
\end{remark}

The rest of this work is devoted to compute a simpler and smaller structure than the whole set of codewords of minimal support that allows, at least partially, know something about the GHW of the code.

\subsection{Test-Sets of a binary code}

From now on, we will restrict our study to binary codes. 
Let $\mathcal C$ be a binary linear code. Let $X$ be a vector with $n = n(\mathcal C)$  variables $x_1, \ldots, x_{n}$. A monomial in $X$ is a product of the form $X^{\mathbf a} := \prod_{i = 1}^n x_i^{\,a_i},$ where $\mathbf a = (a_1, \ldots, a_{n}) \in \mathbb N^{n}$. The total degree of $X^{\mathbf a}$ is $\deg(X^{\mathbf a}) = \sum_{i=1}^{n} a_i$. When $\mathbf a = \mathbf 0$, note that $X^{\mathbf a} = \mathbf 1$. The polynomial ring $\mathbb K[X]$ is the set of all polynomials in $X$ with coefficients in $\mathbb K$, where $\mathbb K$ denotes an arbitrary finite field.

\begin{remark}
For abuse of notation we will write $X^{\mathbf a}$ with $\mathbf a \in {\mathbb F_2^{n}}$. In this case, we understand that the classes of $0,1$ are replaced by the same symbols regarded as integers. Moreover,  we will use the notation $X^I$ for the square free monomial with support $I \subset [n]$, that is, 
$$X^I = \prod_{i \in I} x_i \hbox{ with } I \subseteq \left[ n \right].$$
Let $g=X^A - X^B$ be a binomial with $A,B\subseteq \left[ n\right]$, we define $\mathrm{supp}(g) = A \cup B$. We say that $g$ is in \emph{standard form} if $A\cap B = \emptyset$ or, equivalently, if $\mathrm{supp}(g) = A \sqcup B,$ where $\sqcup$ denotes the disjoint union of $A$ and $B$.
\end{remark}

\begin{definition}
Let $\mathbb K$ be any field, we define the \emph{ideal associated with $\mathcal C$ over $\mathbb K$} as the binomial ideal:
\begin{equation}
I(\mathcal C) = \langle  X^{\mathbf a} - X^{\mathbf b} \mid \mathbf a, \mathbf b \in \mathbb F_2^n,\, \mathbf a + \mathbf b \in \mathcal C  \rangle + \langle x_i^2 - 1 \mid i \in [ n ] \rangle \subseteq \mathbb K[X].
\label{ideal-code}
\end{equation}
\end{definition}
Note that $I(\mathcal C)$ is a zero-dimensional ideal since the quotient ring $R= \mathbb K[X] / I(\mathcal C)$ is a finite-dimensional vector space (i.e. $\dim_{\mathbb K}(R)< \infty$). Moreover, its dimension is equal to the number of \emph{cosets} in $\mathbb F_2^{n} / \mathcal C$.
For $\mathbf a, \mathbf b \in \mathbb F_2^n$ one has that $X^{\mathbf a}  - X^{\mathbf b} \in I(\mathcal C)$ if and only if $\mathbf a - \mathbf b \in \mathcal C$. 
The following result shows how to obtain a set of generators of the ideal $I(\mathcal C)$ from a generator matrix of $\mathcal C$.

\begin{proposition}\cite[Theorem 1]{BBFM:2008}
Let $\{\mathbf w_1, \ldots, \mathbf w_k\}$ be the row vectors of a generator matrix for $\mathcal C$. Then
$$I(\mathcal C) = \left\langle \left\{ X^{\mathbf w_i}-1\right\}_{i \in \left[ k \right]} \cup \left\{ x_i^2 -1\right\}_{i \in \left[ n \right]} \right\rangle$$
\end{proposition}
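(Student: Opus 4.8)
The plan is to prove the two inclusions separately. Write $J$ for the ideal on the right-hand side, generated by $\{X^{\mathbf w_i} - 1\}_{i\in[k]}$ together with $\{x_i^2 - 1\}_{i \in [n]}$, and recall that $I(\mathcal C)$ is generated by all binomials $X^{\mathbf a} - X^{\mathbf b}$ with $\mathbf a + \mathbf b \in \mathcal C$ together with the same squares $x_i^2 - 1$. The inclusion $J \subseteq I(\mathcal C)$ is immediate: the squares are already generators of $I(\mathcal C)$, and each $X^{\mathbf w_i} - 1 = X^{\mathbf w_i} - X^{\mathbf 0}$ has the required form since $\mathbf w_i + \mathbf 0 = \mathbf w_i \in \mathcal C$. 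Hence the content of the statement is the reverse inclusion $I(\mathcal C) \subseteq J$, for which it suffices to show that every generating binomial $X^{\mathbf a} - X^{\mathbf b}$ with $\mathbf c := \mathbf a + \mathbf b \in \mathcal C$ lies in $J$.

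The key device will be to work modulo the subideal $L := \langle x_i^2 - 1 \mid i \in [n]\rangle \subseteq J$. In the quotient $\mathbb K[X]/L$ every monomial reduces to a squarefree one, and multiplication of squarefree monomials mirrors addition in $\mathbb F_2^n$. Concretely, I would first record the \emph{reduction lemma}: whenever two integer exponent vectors $\mathbf d,\mathbf d'$ satisfy $\mathbf d \equiv \mathbf d' \pmod 2$, one has $X^{\mathbf d} - X^{\mathbf d'} \in L$, because any even power of a variable can be collapsed to $1$ using $x_i^{2m} - 1 = (x_i^2 - 1)(x_i^{2m-2} + \cdots + 1)$. This lets me freely replace a product of squarefree monomials by the squarefree monomial attached to the $\mathbb F_2$-sum of their supports, at the cost of an error term lying in $L \subseteq J$.

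Next I would establish the special case $X^{\mathbf c} - 1 \in J$ for every codeword $\mathbf c \in \mathcal C$. Writing $\mathbf c = \sum_{t=1}^{r} \mathbf w_{i_t}$ as an $\mathbb F_2$-sum of rows, the integer-exponent product $X^{\mathbf w_{i_1}} \cdots X^{\mathbf w_{i_r}}$ differs from $X^{\mathbf c}$ by an element of $L$ (reduction lemma), while the telescoping identity
$$X^{\mathbf w_{i_1}} \cdots X^{\mathbf w_{i_r}} - 1 = \sum_{t=1}^{r} X^{\mathbf w_{i_1}} \cdots X^{\mathbf w_{i_{t-1}}}\bigl(X^{\mathbf w_{i_t}} - 1\bigr)$$
exhibits this product minus $1$ as a polynomial combination of the generators $X^{\mathbf w_{i_t}} - 1 \in J$. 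Adding back the $L$-error yields $X^{\mathbf c} - 1 \in J$. For a general binomial I would then use that $X^{\mathbf b}$ is its own inverse modulo $L$ (since $2\mathbf b \equiv \mathbf 0$) to derive the congruence $X^{\mathbf a} - X^{\mathbf b} \equiv X^{\mathbf b}\bigl(X^{\mathbf c} - 1\bigr) \pmod L$; the right-hand side lies in $J$ by the special case just proved, and the congruence error lies in $L \subseteq J$, which closes the inclusion.

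The only genuinely delicate point, and the step I would write out most carefully, is the bookkeeping that converts equalities ``modulo $L$'' into honest membership statements in $J$: one must verify that each passage from an integer exponent vector to its squarefree reduction produces an error that is not merely zero in the quotient but is an explicit $\mathbb K[X]$-multiple of some $x_i^2 - 1$, and hence already an element of $J$. Once this reduction lemma is stated cleanly, everything else — the telescoping, the expression of $\mathbf c$ as a sum of rows, and the self-inverse property of $X^{\mathbf b}$ — is routine.
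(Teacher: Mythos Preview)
The paper does not include its own proof of this proposition: it is quoted verbatim as \cite[Theorem~1]{BBFM:2008} and used as a black box. There is therefore no in-paper argument to compare against.

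That said, your proof is correct and essentially the standard one. The inclusion $J \subseteq I(\mathcal C)$ is indeed trivial, and for the reverse direction your reduction lemma (exponent vectors congruent modulo $2$ give a difference in $L$), the telescoping identity for $X^{\mathbf w_{i_1}}\cdots X^{\mathbf w_{i_r}} - 1$, and the computation $X^{\mathbf b}X^{\mathbf c} \equiv X^{\mathbf a} \pmod L$ (since $b_i + c_i \equiv a_i \pmod 2$) together give exactly what is needed. The only cosmetic remark is that your phrase ``$X^{\mathbf b}$ is its own inverse modulo $L$'' is not quite what you use; what you actually use is the congruence $X^{\mathbf b}X^{\mathbf c} \equiv X^{\mathbf a}$, which you do justify correctly in the next clause.
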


If we fix a term order $\prec$, then the \emph{leading term} of a polynomial $f$ with respect to $\prec$, denoted by $\mathrm{LT}_{\prec}(f)$, is the largest monomial among all monomials which occur with non-zero coefficient in the expansion of $f$. Let $I$ be an ideal in $\mathbb K[X]$, then the \emph{initial ideal} $\mathrm{in}_{\prec}(I)$ is the monomial ideal generated by the leading term of all the polynomials in $I$, i.e., $\mathrm{in}_{\prec}(I) = \langle \left\{ \mathrm{LT}_{\prec}(f) \mid f\in I\right\} \rangle$. By definition, $\mathrm{in}_{\prec}(I)$ is a monomial ideal and, thus, it has a unique minimal generating set formed by monomials. These monomials will be called {\it minimal generators of} $\mathrm{in}_{\prec}(I)$.

\begin{definition}
A finite set of nonzero polynomials $\mathcal G=\{ g_1, \ldots, g_m\}$ of the ideal $I$ is a Gr\"obner basis of $I$ with respect to the term order $\prec$ if the leading terms of the elements of $\mathcal G$ generate the initial ideal $\mathrm{in}_{\prec}(I)$. Moreover $\mathcal G$ is reduced if
\begin{enumerate}
\item $g_i$ is monic (i.e., its leading coefficient is $1$) for all $i \in \left[m\right]$, and
\item none of the monomials appearing in the expansion of $g_j$ is divisible by $\mathrm{LT}_{\prec}(g_i)$ for all $i \neq j$.
\end{enumerate}
\end{definition}

For a given monomial order $\prec$, every ideal has a unique reduced Gr\"obner basis (see, e.g., \cite{AL}).
Since $I(\mathcal C)$ is generated by binomials (differences of monomials), then all its reduced Gr\"obner bases consist of binomials (see \cite{ES:96}). In \cite{BBFM:2008} it is shown that, if  $\mathcal{C}$ is a binary code and we fix a degree compatible term order $\prec$ on $\mathbb K[X]$, then the  reduced Gröbner basis $\mathcal G_\prec$ for the code ideal $I(\mathcal C)$ can be computed by a linear algebra (an FGLM-like) algorithm. Moreover, the reduction provided by $\mathcal G_\prec$ gives a decoding procedure. Along the way, they also prove that the support of every binomial in $\mathcal G_\prec$ different from $x_i^2 - 1$ for $i = 1,\ldots,n$ provides a codeword of minimal support of $\mathcal C$, and that there is a word of Hamming weight $d_1(\mathcal C)$ that can be obtained in this way. More precisely:

\begin{proposition}\cite{BBFM:2008}\label{pr:BBFM} Let $\mathcal G_\prec$ be the reduced Gr\"obner basis of $I(\mathcal C)$ with respect to a degree compatible term order $\prec$. For every binomial $X^{\mathbf a} - X^{\mathbf b} \in \mathcal G_\prec - \{x_i^2 - 1\, \vert \, i \in [n]\}$, then $\mathbf a + \mathbf b \in \mathbb F_2^n$ is a codeword of minimal support of $\mathcal C$. Moreover, there exists $X^{\mathbf a} - X^{\mathbf b} \in \mathcal G_\prec$ such that $\mathrm{w}_H(\mathbf a + \mathbf b ) = d_1(\mathcal C)$. 
\end{proposition}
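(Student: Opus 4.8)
My plan is to work throughout modulo the relations $x_i^2-1$, so that every monomial may be identified with its square-free part and hence with a subset of $[n]$; under this identification $X^{\mathbf u}\equiv X^{\mathbf v} \pmod{I(\mathcal C)}$ precisely when $\mathbf u+\mathbf v\in\mathcal C$, and in particular $X^{I}\equiv \mathbf 1$ exactly when $\mathbf e_I\in\mathcal C$. By \cite{BBFM:2008} every element of $\mathcal G_\prec$ other than the $x_i^2-1$ is a square-free binomial $X^{\mathbf a}-X^{\mathbf b}$ in standard form, so I set $A=\mathrm{supp}(\mathbf a)$ and $B=\mathrm{supp}(\mathbf b)$ with $A\cap B=\emptyset$ and $\mathbf a+\mathbf b$ the indicator of $A\sqcup B$. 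The two facts about a reduced Gr\"obner basis I will use are: (a) the leading term $X^A$ is a \emph{minimal} generator of $\mathrm{in}_\prec(I)$, i.e.\ no proper divisor of $X^A$ lies in $\mathrm{in}_\prec(I)$; and (b) the trailing term $X^B$ is a standard monomial, i.e.\ $X^B\notin \mathrm{in}_\prec(I)$, and therefore no divisor of $X^B$ lies in $\mathrm{in}_\prec(I)$.

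For the first assertion I argue by contradiction. Suppose $\mathbf c=\mathbf a+\mathbf b$, whose support is $A\sqcup B$, is not of minimal support; then there is a nonzero codeword $\mathbf c'$ with $C':=\mathrm{supp}(\mathbf c')\subsetneq A\sqcup B$. Writing $A_1=A\cap C'$, $A_2=A\setminus C'$, $B_1=B\cap C'$, $B_2=B\setminus C'$, the symmetric differences $A_1\sqcup B_1=C'$ and $A_2\sqcup B_2=(A\sqcup B)\setminus C'$ are the supports of the codewords $\mathbf c'$ and $\mathbf c+\mathbf c'$, so both $X^{A_1}-X^{B_1}$ and $X^{A_2}-X^{B_2}$ lie in $I(\mathcal C)$. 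Since $\prec$ is a term order and $X^{A_1}X^{A_2}=X^{A}\succ X^{B}=X^{B_1}X^{B_2}$, one cannot have both $X^{A_1}\preceq X^{B_1}$ and $X^{A_2}\preceq X^{B_2}$; after possibly exchanging the indices $1$ and $2$, I may assume $X^{A_1}\succ X^{B_1}$, whence $X^{A_1}\in\mathrm{in}_\prec(I)$. If $A_1\subsetneq A$ this is a proper divisor of $X^A$, contradicting (a). Otherwise $A_1=A$, so $A_2=\emptyset$ while $B_2\neq\emptyset$ (because $C'\subsetneq A\sqcup B$), and then $\mathbf 1-X^{B_2}=X^{A_2}-X^{B_2}\in I(\mathcal C)$ forces $X^{B_2}\in\mathrm{in}_\prec(I)$; as $X^{B_2}$ divides the trailing term $X^B$, this contradicts (b). Hence $\mathbf c$ has minimal support.

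For the second assertion, fix a codeword $\mathbf c_0$ with $\mathrm{w}_H(\mathbf c_0)=\delta(\mathcal C)=d_1(\mathcal C)$ and support $S$, and note $\mathbf c_0$ is automatically of minimal support. Since $X^S\equiv\mathbf 1$ and $X^S\succ\mathbf 1$, its normal form is $\mathbf 1$, and running the division algorithm (clearing squares via $x_i^2-1$ at every step) produces a chain of square-free monomials $X^{S}=X^{T_0}\to X^{T_1}\to\cdots\to X^{T_m}=\mathbf 1$, where each transition is driven by a binomial $g_j=X^{A}-X^{B}\in\mathcal G_\prec\setminus\{x_i^2-1\}$ (its leading term $X^A$ is square-free, since it divides the square-free $X^{T_j}$) via $T_{j+1}=B\,\triangle\,(T_j\setminus A)$. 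Because $\prec$ is degree compatible, the degrees $|T_j|$ are non-increasing; because each nonempty $T_j$ is the support of a nonzero codeword, $|T_j|\ge\delta(\mathcal C)$. Hence every intermediate weight equals $\delta(\mathcal C)$ until it drops to $0$ at the last step. In that last step $T_{m-1}\to\emptyset$ we must have $B=T_{m-1}\setminus A$, so $A$ and $B$ are disjoint with $A\sqcup B=T_{m-1}$, and the codeword attached to $g_{m-1}$ has weight $|T_{m-1}|=\delta(\mathcal C)=d_1(\mathcal C)$, as desired.

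The delicate points, which I would treat carefully, are the multiplicativity step in the first part --- passing from the single comparison $X^A\succ X^B$ to a comparison of the factors and, in the degenerate case $A_1=A$, switching from property (a) to property (b) --- and, in the second part, the combination of degree compatibility with the minimality of $\delta(\mathcal C)$ that pins every intermediate support to size exactly $\delta(\mathcal C)$ and thereby guarantees that the final binomial step realizes a minimum-weight codeword rather than merely bounding it.
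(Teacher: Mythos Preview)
The paper does not prove this proposition; it is quoted from \cite{BBFM:2008} and used as a black box, so there is no in-paper argument to compare against. Your argument is correct and self-contained. For the first assertion, the splitting $A=A_1\sqcup A_2$, $B=B_1\sqcup B_2$ along a putative smaller support and the use of multiplicativity of $\prec$ to force either a proper divisor of $X^A$ into $\mathrm{in}_\prec(I)$ (contradicting minimality of the leading term) or a divisor of $X^B$ into $\mathrm{in}_\prec(I)$ (contradicting reducedness of the tail) is exactly the right dichotomy, and you handle the boundary case $A_1=A$ cleanly. For the second assertion, reducing $X^S$ to $\mathbf 1$ and using degree compatibility together with the lower bound $|T_j|\ge\delta(\mathcal C)$ to pin all intermediate supports at size $\delta(\mathcal C)$ is a neat way to extract a minimum-weight binomial from the last reduction step; note that your claim $A\sqcup B=T_{m-1}$ indeed follows from $(T_{m-1}\setminus A)\triangle B=\emptyset$ and $A\cap B=\emptyset$. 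The only point worth making explicit is why the non-trivial elements of $\mathcal G_\prec$ are square-free binomials in standard form: if $i\in A\cap B$, then multiplying $X^A-X^B$ by $x_i$ and subtracting a multiple of $x_i^2-1$ yields $X^{A\setminus\{i\}}-X^{B\setminus\{i\}}\in I(\mathcal C)$, whose leading term properly divides $X^A$, contradicting reducedness. With that remark added your proof stands on its own.
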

This result motivates the definition of a $\mathcal G_{\prec}$-test, which by the above proposition is a subset of the set of codewords of minimal support and contains a word of minimum weight. 

\begin{definition}
Given a binary code $\mathcal C$ and a degree compatible term order $\prec$, we will call $\mathcal G_\prec$-test set of $\mathcal C$ to the subset of codewords of minimal support of $\mathcal C$ whose supports are given by those binomials in the reduced Gr\"obner basis of $I(\mathcal C)$ different from $x_i^2-1$
for all $i \in [n]$.
\end{definition}

{
\begin{example}
\label{toy-example2}
We continue the toy Example  \ref{toy-example1}. Then, the ideal associated with $\mathcal C$ over $R$ is defined as
$$I(\mathcal C) = \left\langle \left\{\begin{array}{ccc}
x_1x_6-1, &x_2x_3x_5-1, &x_4x_5x_6-1
\end{array}\right\} \cup \left\{ x_i^2-1 \right\}_{i \in \left[ 6 \right]}\right\rangle \subseteq R$$
Now we consider the degree reverse lexicographic order $\prec$ with $x_6\prec\ldots \prec x_1$. Then, the reduced Gr\"obner basis $\mathcal G_{\prec}$ of $I(\mathcal C)$ with respect to $\prec$ has $14$ elements. An a $\mathcal G_{\prec}$-test-set of $\mathcal C$ is given by codewords of $\mathcal M_{\mathcal C}$ whose supports are given by those binomials of $\mathcal G_{\prec}$ different from $x_i^2-1$, i.e. the set
$$ \left\{ \begin{array}{cccc}
w_1=100001, & w_3 = 011010, & w_4 = 000111, & w_6=011101
\end{array}\right\}\subseteq \mathcal M_{\mathcal C}.$$

\end{example}
}

\section{On a false conjecture}\label{sec:conjfalse}
 Note that the second and the third author of this work have conjectured for a long time the following: \\[0.025em]
 
\noindent \textit{ \textbf{Conjecture (false)} If one considers the monomial ideal $M$ associated with the supports of the binomials in the $\mathcal G_\prec$-test set, then $d_i = {\rm min}\{j \, \vert \, \beta_{i,j}(R/M) \neq 0\}$ for $i \in \{1,\ldots,k(\mathcal C)\}$, that is, the $\mathcal G_\prec$-test set determines the GHWs of the codes.}\\[0.025em]

  \noindent This conjecture was supported by  computational evidence (see Examples \ref{toy-example3} and \ref{ex:ej1} for some example satisfying this conjecture),  but unfortunately this is not true as shown in the counterexample~\ref{ex:contraej}. Theorem \ref{teo:gordo} in Section~\ref{sec:main} in this paper  will prove this fact for $i = 2$ (and it was known for $i = 1$ \cite{BBFM:2008}). Note also that in \cite{Irene}, the authors show that from the Graver basis associated to $I(\mathcal C)$ one can purge the set of codewords of minimal support of $\mathcal C$, i.e. $\mathcal M_{\mathcal C}$.

	\begin{example}
		\label{toy-example3}
		We continue the Example \ref{toy-example1} where we have computed the Betti diagram associated to $R_{\Delta} = R / I_{\Delta}$ where $I_{\Delta}$ is generated by the monomials associated to $\mathcal M_{\mathcal C}$. Then, in Example \ref{toy-example2} we compute a $\mathcal G_{\prec}$-test-set $\mathcal T_{\mathcal C}$ of $\mathcal C$ with respect to the degree reverse lexicographic order $\prec$. Note that $\mathcal T_{\mathcal C}\subset \mathcal M_{\mathcal C}$ with just $4$ elements. If we consider $M$ the corresponding monomial ideal related to $\mathcal T_{\mathcal C}$ and compute a graded minimal free resolution of $R / M$ we get the following Betti diagram:
		\[
		\begin{array}{c|cccc}
		& 0 & 1 & 2 & 3  \\ \hline 
		0 &  1 & 0 & 0 & 0 \\
		1 & 0 & 1 & 0 & 0 \\
		2 & 0 & 2 & 1 & 0\\
		3 & 0 & 4 & 4 & 2 \\
		\end{array}
		\]
		The Betti numbers of $R/M$ are smaller than those of $R_{\Delta}$ and in this example the sequence
		\[ ({\rm min}\{ j \, \vert \, \beta_{i,j}\neq 0 \}; 1 \leq i \leq 3) = (2,4,6), \]
		coincides with the GHW of $\mathcal C$
	\end{example}
	
	The following example is a less trivial case where the difference between our structure $\mathcal T_{\mathcal C}$ and the set of codewords of minimal support of $\mathcal C$, $\mathcal M_{\mathcal C}$, is larger.

\begin{example}\label{ex:ej1} Let $\mathcal C$ be the binary non-degenerate $[14,9]$-code with generator matrix
	\[G = \left(
	\begin{array}{cccccccccccccccccccccc}
	1 & 1 & 0 & 1 & 0 & 0 & 0 & 0 & 0 & 1 & 0 & 0 & 0 & 0\\
	1 & 1 & 1 & 1 & 0 & 1 & 0 & 0 & 1 & 1 & 0 & 0 & 0 & 1\\
	0 & 0 & 1 & 0 & 0 & 1 & 1 & 1 & 1 & 1 & 0 & 1 & 1 & 0\\
	1 & 1 & 0 & 1 & 1 & 1 & 0 & 0 & 1 & 1 & 0 & 0 & 0 & 1\\
	0 & 1 & 1 & 1 & 1 & 0 & 0 & 1 & 0 & 0 & 1 & 0 & 1 & 1\\
	0 & 0 & 1 & 1 & 0 & 1 & 1 & 0 & 1 & 0 & 0 & 1 & 0 & 1\\
	0 & 0 & 1 & 0 & 1 & 1 & 1 & 0 & 1 & 1 & 1 & 0 & 1 & 0\\
	0 & 1 & 0 & 1 & 1 & 0 & 1 & 0 & 1 & 1 & 0 & 1 & 1 & 0\\
	1 & 1 & 1 & 1 & 1 & 1 & 0 & 1 & 0 & 0 & 0 & 0 & 1 & 0\\
	\end{array} \right) \in \mathbb F_2^{\,9 \times 14}\]
	
	Its Stanley-Reisner ring is $R_{\Delta} = R/I_{\Delta}$ where $R=\mathbb F_2[x_1, \ldots, x_{14}]$ and $I_\Delta$ is minimally generated by $147$ monomials. If we compute a graded minimal free resolution of $R_{\Delta}$ we get the following Betti diagram:
	
	\[
	\begin{array}{c|ccccccccccc}
	& 0 & 1 & 2 & 3 & 4 & 5 & 6 & 7 & 8 & 9 \\ \hline 
	0 &     1   &  0 &    0  &   0  &   0  &   0   &  0  &   0     & 0    & 0 \\
	1 &     0   &  2 &    0  &   0  &   0  &   0   &  0  &   0     & 0    & 0 \\
	2 &     0   &  8 &    5  &   0  &   0  &   0   &  0  &   0     & 0    & 0 \\
	3 &     0   & 34 &   82  &  48  &   8  &   0   &  0  &   0     & 0    & 0 \\
	4 &    0   & 52 &  441  & 897  & 753  & 289   & 42  &   0     & 0    & 0 \\
	5 &     0   & 51 & 1345  & 7410 &18309 & 25248 & 21008 & 10579 & 2990 &  366
	
	\end{array}
	\]
	Hence, by Theorem \ref{Verdure-Teo}, we have that the GHW are $$(d_1(\mathcal C),\ldots,d_9(\mathcal C)) = (2,4,6,7,9,10,12,13,14).$$
	Moreover, if we consider $\prec$ the graded degree lexicographic order with $x_1 \succ \cdots \succ x_{14}$, we get that the $\mathcal G_\prec$-test set has 24 elements. If we consider $M$ the corresponding monomial ideal and compute a graded minimal free resolution of $R / M$ we get the following Betti diagram:
	
	\[
	\begin{array}{c|ccccccccccc}
	& 0 & 1 & 2 & 3 & 4 & 5 & 6 & 7 & 8 & 9 \\ \hline 
	0 &    1   &  0  &   0  &   0 &    0   &  0   &  0   &  0  &   0    & 0 \\
	1 &    0   &  2  &   0  &   0 &    0   &  0   &  0   &  0  &   0    & 0 \\
	2 &    0   &  6  &   3  &   0 &    0   &  0   &  0   &  0  &   0    & 0 \\
	3 &    0   & 13  &  38  &  17 &    2   &  0   &  0   &  0  &   0    & 0  \\
	4 &    0   &  3  &  92  & 194 &  130   &  35  &   3  &   0 &    0    & 0 \\
	5 &    0   &  0  &  83  & 599 & 1410   & 1621 & 1040 &  378 &    71  &   5 \\
	6 &    0   &  0  &   0  &   0 &    2   &   5  &   4  &   1  &    0   &  0 \\

	\end{array}
	\]
	As one can observe, the Betti numbers of $R/M$ are smaller than those of $R_\Delta$. Moreover, in this example the sequence 
$ ({\rm min}\{ j \, \vert \, \beta_{i,j}\neq 0 \}; 1 \leq i \leq 9)$ is given by $ (2,4,6,7,9,10,12,13,14)$ and it 
	coincides with the GHW of $\mathcal C$.
\end{example}

\begin{example}\label{ex:contraej}
	Let $\mathcal C$ be the binary non-degenerate $[10,7]$-code with generator matrix
	$$G = \left(\begin{array}{cccccccccc}
	1 & 0 & 0 & 0 & 1 & 0 & 0 & 1 & 1 & 1 \\
	1 & 1 & 1 & 1 & 0 & 1 & 1 & 1 & 1 & 1 \\
	1 & 0 & 0 & 1 & 0 & 1 & 1 & 0 & 0 & 0 \\
	0 & 0 & 1 & 1 & 1 & 0 & 1 & 0 & 0 & 1 \\
	1 & 0 & 1 & 0 & 0 & 1 & 1 & 0 & 0 & 1 \\
	0 & 0 & 1 & 1 & 0 & 1 & 1 & 1 & 0 & 0 \\
	0 & 0 & 1 & 1 & 0 & 0 & 0 & 1 & 1 & 1 \\
	\end{array}\right) \in \mathbb F_2^{\,7\times 10}$$

	Its Stanley-Reisner ring is $R_{\Delta} = R/I_{\Delta}$ where $R=\mathbb F_2[x_1, \ldots, x_{10}]$ and $I_\Delta$ is minimally generated by $42$ monomials. If we compute a graded minimal free resolution of $R_{\Delta}$ we get the following Betti diagram:
	$$
	\begin{array}{c|cccccccc}
	& 0 & 1 & 2 & 3 & 4 & 5 & 6 & 7 \\ \hline
	0 & 1 & 0 & 0 & 0 & 0 & 0 & 0 & 0\\
	1 & 0 & 4 & 0 & 0 & 0 & 0 & 0 & 0\\
	2 & 0 & 18 & 48 & 32 & 7 & 0 & 0 & 0\\
	3 & 0 & 20 & 214 & 637 & 874 & 637 & 242 & 38\\
	
	\end{array}
	$$-/
	Hence, by Theorem \ref{Verdure-Teo}, we have $d_1(\mathcal C) = 2$, $d_2(\mathcal C) = 4$, $d_3(\mathcal C) = 5$, $d_4(\mathcal C) = 6$, $d_5(\mathcal C) = 8$, $d_6(\mathcal C) = 9$, $d_7(\mathcal C) = 10$.
	
	If we compute a $\mathcal G_\prec$-test set $T$   with respect to the degree reverse lexicographical order, we get that $T$ has only 10 elements. Computing a minimal graded free resolution of $R/M$ being the ideal $M := \langle \{ X^{\mathbf c} \, \vert \, \mathbf c \in T \} \rangle$,   we get the following Betti diagram.
	$$
	\begin{array}{c|ccccccc}
	& 0 & 1 & 2 & 3 & 4 & 5 & 6 \\ \hline
	0 & 1 & 0 &  0 & 0 & 0 & 0 & 0 \\
	1 & 0 & 4 & 0 & 0 & 0 & 0 & 0 \\
	2 & 0 & 4 & 14 & 5 & 0 & 0 & 0 \\
	3 & 0 & 2 & 23 & 56 & 48 & 17& 2 \\
	\end{array}
	$$
	
	Thus, we can recover the correct values of $d_1(\mathcal C), d_2(\mathcal C), d_3(\mathcal C)$ from this resolution but not $d_i(\mathcal C)$ for $i =4,5,6,7$.
\end{example}
Anyway, the experiments we dealt trying to prove that conjecture drove us the results in Section~\ref{sec:main} and to propose some further conjectures and future work based on experimental  evidences showed in Section~\ref{sec:last}.

\section{Second GHW obtained from a $\mathcal G_\prec$-test set}\label{sec:main}  

In this section we will explain how to compute the second generalized Hamming weight from a $\mathcal G_\prec$-test set. Throughout this section, $\mathcal C$ will denote   a binary linear code and let $\mathcal G_{\prec}$ be the reduced Gr\"obner basis of the ideal $I(\mathcal C)$ with respect to $\prec$, where we take $\prec$ to be any degree compatible ordering on $\mathbb K[X]$.
The following result is a technical lemma whose proof is just an easy exercise in set theory. We will denote by $A\triangle B$ the \emph{symmetric difference} of the subsets $A, B \subseteq [n]$, which is the set of elements which are in the union of the two sets $A\cup B$, minus their intersection $A\cap B$. Moreover, for a set $A$, its cardinality is denoted by $|A|$.

\begin{lemma}
\label{SymetricDifLemma}
Let $A,B \subseteq X$ with $|A\cap B|> \frac{|A|}{2}$. Then $C=A\triangle B$ satisfies the following statements: 
\begin{enumerate}
\begin{multicols}{3}
    \item $A\cup B=A\cup C$
    \item $|A\cap C| < \frac{|A|}{2}$
    \item $|C|< |B|$.
\end{multicols}
\end{enumerate}
\end{lemma}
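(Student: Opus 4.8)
The plan is to derive all three claims from one structural observation about the symmetric difference together with elementary cardinality counting; nothing about codes, matroids, or the role of $X$ as a variable set enters, so I would treat $A$ and $B$ purely as finite subsets. The single fact I would record at the outset is the disjoint decomposition $C = A \triangle B = (A \setminus B) \sqcup (B \setminus A)$, from which two consequences follow by intersecting and unioning with $A$: namely $A \cap C = A \setminus B$ (since $A$ meets $A \setminus B$ in all of it and meets $B \setminus A$ in nothing), and $A \cup C = A \cup (B \setminus A)$.

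Statement (1) is then immediate and uses no hypothesis: because $A \setminus B \subseteq A$, adjoining that piece to $A$ adds nothing, so $A \cup C = A \cup (B \setminus A) = A \cup B$. Statement (2) follows from the identity $A \cap C = A \setminus B$ recorded above: counting gives $|A \cap C| = |A \setminus B| = |A| - |A \cap B|$, and inserting the hypothesis $|A \cap B| > |A|/2$ yields $|A \cap C| < |A| - |A|/2 = |A|/2$.

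For statement (3) the one computation worth writing out is the inclusion–exclusion tally of the disjoint union: $|C| = |A \setminus B| + |B \setminus A| = (|A| - |A \cap B|) + (|B| - |A \cap B|) = |A| + |B| - 2\,|A \cap B|$. Then $|C| < |B|$ is equivalent, after cancelling $|B|$, to $|A| - 2\,|A \cap B| < 0$, i.e.\ to $|A \cap B| > |A|/2$, which is precisely the hypothesis.

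As the statement itself advertises, there is no genuine obstacle: every step is bookkeeping. The only points deserving a moment of care are getting the inclusion–exclusion count in (3) right and noticing that the \emph{strictness} of the hypothesis is exactly what feeds the strict inequalities demanded in (2) and (3) — a weak hypothesis would yield only weak conclusions. I would therefore present the proof as the three one-line verifications above, preceded by the decomposition of $C$.
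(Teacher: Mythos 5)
Your proof is correct: the decomposition $C=(A\setminus B)\sqcup(B\setminus A)$, the identity $A\cap C=A\setminus B$, and the count $|C|=|A|+|B|-2|A\cap B|$ give all three claims exactly as you state, with the strict hypothesis $|A\cap B|>|A|/2$ delivering the strict inequalities in (2) and (3). The paper omits the proof entirely, calling it ``an easy exercise in set theory,'' and your argument is precisely the intended elementary verification.
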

 
Consider the set $\mathbf M$ of codewords in $\mathcal C$ belonging to a linear subspace of dimension two of minimal support, more precisely, 
\begin{equation}
 \mathbf M = \{\mathbf m \in \mathcal C \, \vert \, \exists \mathbf m' \in \mathcal C \text{ such that }
d_2(\mathcal C) = \mathrm{supp}(\left \langle \mathbf m, \mathbf m' \right\rangle)\}, 
\label{eq:minimal} \end{equation} 
and define $\mathbf m_1, \mathbf m_2 \in \mathbf M$ as follows:
\begin{itemize}
\item[(a)] $\mathbf m_1 := {\rm min}_{\prec}(\mathbf M)$, i.e., $\mathbf m_1$ is the smallest codeword with respect to $\prec$ in $\mathbf M$, 
\item[(b)] $\mathbf m_2 := {\rm min}_{\prec}\{\mathbf m \in \mathbf M \, \vert \, d_2(\mathcal C) = \mathrm{supp}(\left \langle \mathbf m_1, \mathbf m \right)\rangle\}$, i.e, $\mathbf m_2$ is the smallest codeword with respect to $\prec$ such that 
$d_2(\mathcal C) = \mathrm{supp}(\left \langle \mathbf m_1, \mathbf m_2 \right\rangle)$.
\end{itemize}
With these conditions we define:
$$\begin{array}{ccc}
I = \mathrm{supp}(\mathbf m_1) & \hbox{ and } &
J= \mathrm{supp}(\mathbf m_2).
\end{array}$$

\begin{remark}\label{remark1}
Since $\prec$ is degree compatible and, by Lemma \ref{SymetricDifLemma}, we have that $$|I\cap J|\leq \frac{|I|}{2}\leq \frac{|J|}{2}.$$
\end{remark}

\begin{proposition}\label{Prop1}
There exists a binomial $f\in \mathcal G_{\prec}\subseteq \mathbb K[X]$ such that $\mathrm{supp}(f) = I$.
\end{proposition}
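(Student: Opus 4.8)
The plan is to locate a Gr\"obner-basis binomial \emph{through the codeword $\mathbf m_1$ itself}. Since $\mathbf m_1 \in \mathcal C$ we have $X^{\mathbf m_1} = X^I \equiv 1 \pmod{I(\mathcal C)}$, and because $\prec$ is degree compatible with $|I| \geq 1$, the leading term of $X^I - 1 \in I(\mathcal C)$ is $X^I$; hence $X^I \in \mathrm{in}_\prec(I(\mathcal C))$. Therefore some minimal generator of $\mathrm{in}_\prec(I(\mathcal C))$ divides $X^I$, i.e. there is $A \subseteq I$, $A \neq \emptyset$, such that $X^A$ is the leading term of a reduced binomial $g = X^{\mathbf a} - X^{\mathbf b} \in \mathcal G_\prec$ (written in standard form, with $A = \mathrm{supp}(\mathbf a)$ and $B = \mathrm{supp}(\mathbf b)$, so $X^{\mathbf a} = X^A$ and $X^B$ the normal form of $X^A$). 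As $\mathrm{LT}_\prec(g) = X^A$ is square-free we have $g \neq x_i^2-1$, so Proposition~\ref{pr:BBFM} applies: $\mathbf c := \mathbf a + \mathbf b \in \mathcal M_{\mathcal C}$ and $\mathrm{supp}(g) = A \sqcup B = \mathrm{supp}(\mathbf c)$. The whole statement thus reduces to proving $\mathbf c = \mathbf m_1$.

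First I would show $\mathbf c \preceq_\prec \mathbf m_1$. Since $A \subseteq I$ and $\mathbf m_1 \in \mathcal C$, the exponent vectors of $X^A$ and $X^{I \setminus A}$ differ by $\mathbf m_1$, so $X^{I \setminus A}$ lies in the coset of $X^A$ modulo $I(\mathcal C)$; as $X^B$ is the $\prec$-minimum of that coset, $X^B \preceq X^{I \setminus A}$. Multiplying by $X^A$ and using that $\prec$ is a term order (and that $A\cap B = \emptyset = A\cap(I\setminus A)$), we get $X^{\mathbf c} = X^A X^B \preceq X^A X^{I \setminus A} = X^I = X^{\mathbf m_1}$, that is, $\mathbf c \preceq_\prec \mathbf m_1$.

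To force equality I would bring in an optimal subspace. Fix $\mathbf m'$ with $|\mathrm{supp}(\langle \mathbf m_1, \mathbf m' \rangle)| = d_2(\mathcal C)$, set $D^{*} = \langle \mathbf m_1, \mathbf m' \rangle$ and $W = \mathrm{supp}(D^{*})$, so $|W| = d_2$. Because $d_2 < d_3$, no $3$-dimensional subcode can be supported on $W$; hence the only codewords whose support is contained in $W$ are exactly those of $D^{*}$. Consequently, \emph{once one knows that $B \subseteq W$}, the codeword $\mathbf c$ is supported on $W$, so $\mathbf c \in D^{*}\setminus\{\mathbf 0\} \subseteq \mathbf M$; the minimality of $\mathbf m_1$ then gives $\mathbf m_1 \preceq_\prec \mathbf c$, and together with $\mathbf c \preceq_\prec \mathbf m_1$ we conclude $\mathbf c = \mathbf m_1$. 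This yields $A \sqcup B = I$, hence $\mathrm{supp}(g) = I$, as required.

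The main obstacle is precisely the inclusion $B \subseteq W$: a priori the normal form of $X^A$ could \emph{escape} the optimal support $W$, in which case $\mathbf c$ would be a codeword of minimal support with $\mathbf c \prec_\prec \mathbf m_1$ and $\mathrm{supp}(\mathbf c) \not\subseteq W$. Ruling this out is where the minimality of $\mathbf m_1$ in $\mathbf M$ together with Lemma~\ref{SymetricDifLemma} and Remark~\ref{remark1} must be used. Concretely, I would choose $X^A$ to be the $\prec$-smallest non-standard square-free monomial supported on $I$ (which is automatically a minimal generator, since every proper divisor has smaller degree and is therefore standard); this forces $|A| \leq |I|/2$, and the symmetric-difference operation of Lemma~\ref{SymetricDifLemma} can then be applied to trade the escaping part of $\mathrm{supp}(\mathbf c)$ for a strictly $\prec$-smaller codeword of $\mathbf M$, contradicting the choice of $\mathbf m_1$. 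Turning this last reduction into a rigorous proof that no escape can occur is the crux of the argument.
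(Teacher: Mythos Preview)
Your argument is incomplete, and you yourself flag the hole: the inclusion $B\subseteq W$ is never established, and the closing paragraph is only a plan, not a proof. Two more specific problems:

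\begin{itemize}
\item Your claim that choosing $X^A$ as the $\prec$-smallest non-standard square-free monomial supported on $I$ ``forces $|A|\le |I|/2$'' is not justified. What one does get from any balanced split $I=I_1\sqcup I_2$ is a non-standard monomial of size $\lceil |I|/2\rceil$, hence $|A|\le\lceil |I|/2\rceil$; there is no reason the inequality should be strict when $|I|$ is odd, and in any case this bound alone does not control $B\setminus W$.
\item Your contradiction strategy requires showing $\mathbf c\in\mathbf M$ whenever $\mathbf c\prec\mathbf m_1$. The natural witness would be a $2$-dimensional subcode through $\mathbf c$ of support exactly $d_2$, but the obvious candidates $\langle\mathbf c,\mathbf m_1\rangle$ or $\langle\mathbf c,\mathbf m'\rangle$ have support $I\cup B$ (resp.\ ${\rm supp}(\mathbf m')\cup A\cup B$), and your estimates do not force these to have size $\le d_2$. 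Lemma~\ref{SymetricDifLemma} only rearranges supports inside a fixed union; it does not prevent $B$ from leaving $W$.
\end{itemize}

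The paper's proof takes a different route and avoids this trap entirely. Instead of starting from \emph{some} Gr\"obner-basis element with leading term inside $I$ and arguing backwards that its support must be $I$, the paper \emph{names} a specific balanced binomial $f=X^{I_1}-X^{I_2}$ with $I=I_1\sqcup I_2$, $|I_1|-1\le|I_2|\le|I_1|$, $X^{I_1}\succ X^{I_2}$, and shows directly that (a) $X^{I_1}$ is a minimal generator of $\mathrm{in}_\prec(I(\mathcal C))$ and (b) $X^{I_2}\notin\mathrm{in}_\prec(I(\mathcal C))$. Both (a) and (b) are proved by contradiction, and the key ingredient in each case is the minimality of $\mathbf m_2$ (not of $\mathbf m_1$): any hypothetical smaller Gr\"obner element produces a codeword $K$ with $X^K\prec X^J$ and $|I\cup K|\le |I\cup J|=d_2(\mathcal C)$, contradicting the choice of $\mathbf m_2$ as the $\prec$-smallest partner of $\mathbf m_1$. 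Combined with Remark~\ref{remark1} this yields the needed size estimates. Your outline uses only an \emph{arbitrary} partner $\mathbf m'$, so this leverage is unavailable; that is precisely why you cannot close the gap.
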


\begin{proof}
Consider $f=X^{I_1}-X^{I_2} \in I(\mathcal C)$ any binomial with $I=I_1 \sqcup I_2$, $|I_1| - 1 \leq |I_2| \leq |I_1|$ and $X^{I_1}\succ X^{I_2}$, we will show that $f\in \mathcal G_{\prec}$. For proving this, it suffices to check that:
\begin{itemize}
\item[(a)] $X^{I_1} = \mathrm{LT}_{\prec}(f)$ is a minimal generator of $\mathrm{in}_{\prec}(I(\mathcal C))$, and
\item[(b)] $X^{I_2} \notin \mathrm{in}_{\prec}(I(\mathcal C))$.
\end{itemize}

Proof of (a). By contradiction suppose that there exists a binomial $h = X^{K_1}- X^{K_2}\in \mathcal G_{\prec}$ with 
$K=K_1 \sqcup K_2$, 
$\mathrm{LT}_{\prec}(h) = X^{K_1}$ (and, in particular, $|K_1| \geq |K_2|$) such that $K_1 \neq I_1$ and $X^{I_1}$ is divisible by $X^{K_1}$, i.e. $K_1 \subsetneq I_1$. 

\smallskip

\noindent \emph{Claim:} $d_2(\mathcal C) \leq |I\cup K| -1$. 

\noindent \emph{Proof of the claim:} We have that $|K_2| \leq |K_1|  \leq |I_1| - 1 \leq |I_2|$. Then $|K| < |I|$ and, in particular, $X^K \prec X^I \prec X^J$. Hence, by the choice of $\mathbf m_2$, we have that $|I \cup  J| < |I \cup K|$ and the \emph{Claim} follows.

By the previous \emph{Claim} we have that
\begin{eqnarray*}
|I \cup J| & = & d_2(\mathcal C) \leq |I\cup K| -1 ~~~~ \\
&=& |I\cup K_2|-1 ~~~~ \hbox{(since } K_1 \subsetneq I \hbox{)} \\
&=& |I| + |K_2| - |I\cap K_2|-1\\ 
&\leq &|I| + |K_1|- |I\cap K_2|-1 ~~~~ \hbox{(since } X^{K_1} \succ X^{K_2} \hbox{)}.
\end{eqnarray*}
Thus, 
$$|I| + |J| - |I \cap J| = |I \cup J| \leq |I| + |K_1|- |I\cap K_2|-1$$
or equivalently $|J| - |I \cap J| \leq |K_1|- |I\cap K_2|-1$. Thus,
\begin{eqnarray*}
|I_1| - \frac{1}{2}  & \leq & \frac{|I|}{2} \leq  \frac{|J|}{2} \leq |J| - |I\cap J| ~~~~ \hbox{ (by Remark \ref{remark1})}\\
& \leq & |K_1|- |I\cap K_2|-1 \leq |K_1|-1.
\end{eqnarray*}
Therefore, $|I_1| \leq |K_1| - \frac{1}{2}$ and $K_1\subsetneq I_1$, a contradiction.

Proof of (b). By contradiction, we assume that there exists $h= X^{K_1} - X^{K_2}\in \mathcal G_{\prec}$ such that $\mathrm{LT}_{\prec}(h)=X^{K_1}$ divides $X^{I_2}$ or, equivalently, $K_1 \subseteq I_2$. Then $K_2 \prec K_1 \prec I_2 \prec I_1$ and, in particular, $X^K \prec X^I \prec X^J$.  Hence, by the choice of $\mathbf m_2$, we have that $|I \cup  J| < |I \cup K|.$
\begin{eqnarray*}
d_2(\mathcal C) &=& |I \cup J| \leq |I\cup K|-1 = |I \cup K_2|-1 = |I| + |K_2|-|I\cap K_2|-1 \\
&\leq& |I| + |K_2|-1 \leq |I|+ |K_1|-1.
\end{eqnarray*}
Therefore, $|J|-|I\cap J|\leq |K_1|-1$. And, by Remark \ref{remark1} we deduce that 
$$|I_2|\leq \frac{|I|}{2}\leq \frac{|J|}{2}\leq |J|-|I\cap J| \leq |K_1| - 1.$$
Therefore, $|I_2|<|K_1|$ and $K_1\subseteq I_2$, which is a contradiction. \qed
\end{proof}


One can check that the same result (and the same proof) holds for all $I'$ such that $I' = {\rm supp}(\mathbf m)$ with $\mathbf m \in \mathbf M$ and $|I'| = |I|$. As a consequence of this observation we have that:
\begin{corollary}
If $|I| = |J|$. Then, we can always find binomials $f,g\in \mathcal G_{\prec}\subseteq \mathbb K[X]$ such that $\mathrm{supp}(f) = I$ and $\mathrm{supp}(g) = J$.
\end{corollary}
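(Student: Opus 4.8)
The plan is to read the corollary off Proposition~\ref{Prop1} together with the observation recorded immediately before the statement, namely that the conclusion of Proposition~\ref{Prop1} remains valid for the support of \emph{any} codeword of $\mathbf M$ of minimal weight. Proposition~\ref{Prop1} already supplies the binomial $f$, since $I=\mathrm{supp}(\mathbf m_1)$; the whole task is therefore to produce the second binomial $g$ with $\mathrm{supp}(g)=J$.

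First I would verify that $J$ is covered by the observation. By construction $\mathbf m_2\in\mathbf M$, and since $\mathbf m_1={\rm min}_\prec(\mathbf M)$ with $\prec$ degree compatible, the weight $|I|$ is the smallest weight attained in $\mathbf M$. Remark~\ref{remark1} gives $|I|\le|J|$, so the standing hypothesis $|I|=|J|$ forces $\mathbf m_2$ to attain this minimal weight as well. Thus $J=\mathrm{supp}(\mathbf m_2)$ with $\mathbf m_2\in\mathbf M$ and $|J|=|I|$, which is exactly the configuration to which the observation applies; invoking it with $I'=J$ yields $g\in\mathcal G_\prec$ with $\mathrm{supp}(g)=J$, and the corollary is proved.

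The only point that deserves care is that the observation genuinely applies when $\mathbf m_2$ plays the role previously held by $\mathbf m_1$, and this is where I expect the (mild) difficulty to lie. Concretely, one replaces $\mathbf m_1$ by $\mathbf m_2$ and introduces the $\prec$-minimal codeword $\mathbf m'\in\mathbf M$ with $d_2(\mathcal C)=\mathrm{supp}(\langle\mathbf m_2,\mathbf m'\rangle)$, which exists because $\mathbf m_1$ is itself such a codeword. Every inequality in the proof of Proposition~\ref{Prop1} then transfers verbatim once one re-establishes the analogue of Remark~\ref{remark1}, i.e. $|J\cap\mathrm{supp}(\mathbf m')|\le |J|/2\le|\mathrm{supp}(\mathbf m')|/2$. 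This in turn follows from Lemma~\ref{SymetricDifLemma} together with degree-compatibility in the same way as Remark~\ref{remark1}: if the intersection exceeded $|J|/2$, then $\mathbf m_2+\mathbf m'$ would have support $J\triangle\mathrm{supp}(\mathbf m')$ of strictly smaller cardinality, would still span, together with $\mathbf m_2$, a two-dimensional subspace of support $d_2(\mathcal C)$, and would be $\prec$-smaller than $\mathbf m'$, contradicting the minimal choice of $\mathbf m'$. With this inequality secured, the role of ``the choice of $\mathbf m_2$'' in Proposition~\ref{Prop1} is taken over by ``the choice of $\mathbf m'$'', and the argument closes.
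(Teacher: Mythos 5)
Your proof is correct and follows the same route as the paper, which simply observes that Proposition~\ref{Prop1} and its proof apply verbatim to the support of any element of $\mathbf M$ of minimal weight, and that under the hypothesis $|I|=|J|$ (together with Remark~\ref{remark1} and degree compatibility) $\mathbf m_2$ is such an element. You additionally spell out the re-derivation of the Remark~\ref{remark1} analogue via Lemma~\ref{SymetricDifLemma}, a verification the paper leaves to the reader with ``one can check''.
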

 We just found that $I$ is always involved in the supports associated with $\mathcal G_{\prec}$ and that if $J$ has the same cardinal as them, also the second GHW can be derived from the Gr\"obner basis. Let us prove now the general case.
 
 \begin{proposition}\label{Prop2}
There exists a binomial $g\in \mathcal G_{\prec}\subseteq \mathbb K[X]$ such that  $\mathrm{supp}(g) = J$.
 \end{proposition}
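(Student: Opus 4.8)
The plan is to mirror the proof of Proposition~\ref{Prop1}. I will realize $g$ as a balanced splitting of $\mathbf m_2$: writing $J = J_1 \sqcup J_2$ with $|J_1|-1 \le |J_2| \le |J_1|$, $X^{J_1} \succ X^{J_2}$, and $g = X^{J_1} - X^{J_2} \in I(\mathcal C)$, it suffices (exactly as in Proposition~\ref{Prop1}) to check that $X^{J_1}$ is a minimal generator of $\mathrm{in}_\prec(I(\mathcal C))$ and that $X^{J_2} \notin \mathrm{in}_\prec(I(\mathcal C))$; these two facts force $g$ to be the reduced Gr\"obner basis element with leading term $X^{J_1}$ and tail equal to the normal form $X^{J_2}$, so $g \in \mathcal G_\prec$ with $\mathrm{supp}(g) = J$. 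The one genuinely new ingredient is the choice of splitting: since Remark~\ref{remark1} gives $|I\cap J| \le |J|/2 \le |J_2|$, I may arrange $I\cap J \subseteq J_2$, equivalently $J_1 \subseteq J\setminus I$. This adaptive choice is precisely what lets me reuse the counting of Proposition~\ref{Prop1}.

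The engine behind both verifications is a reduction step applied to $\mathbf m_2$. Suppose $h = X^{K_1} - X^{K_2} \in \mathcal G_\prec$ with $X^{K_1} = \mathrm{LT}_\prec(h)$ (so $|K_1| \ge |K_2|$) and $K_1 \subseteq J$; write $K = K_1 \sqcup K_2$ and let $\mathbf c_K \in \mathcal C$ be the associated minimal support codeword, $\mathrm{supp}(\mathbf c_K) = K$ (Proposition~\ref{pr:BBFM}). The balance condition gives $|K| \le 2|K_1| < |J|$, hence $X^K \prec X^J$. Because $K_1 \subseteq J$, the monomial $X^{K_1}$ divides $X^J$, so reducing the binomial $X^J - 1 \in I(\mathcal C)$ by $h$ produces the codeword $\mathbf m_2' := \mathbf m_2 + \mathbf c_K$, of support $J' = J \triangle K$, with $X^{J'} \prec X^J$ (a reduction step strictly lowers the leading monomial). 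I then pair $\mathbf m_2'$ with $\mathbf m_1$: using $|K| < |J| \le |I\triangle J|$ one checks $\mathbf m_2' \notin \{\mathbf 0, \mathbf m_1\}$, so $\langle \mathbf m_1, \mathbf m_2'\rangle$ is two-dimensional with support $I \cup J'$. When $K_1 \cap I = \emptyset$, a direct count gives $|I\cup J'| = d_2(\mathcal C) - |K_1| - |(K_2\cap J)\setminus I| + |K_2\setminus(I\cup J)| \le d_2(\mathcal C)$, the inequality coming from $|K_2\setminus(I\cup J)| \le |K_2| \le |K_1|$. Since $d_2(\mathcal C)$ is minimal, $|I\cup J'| = d_2(\mathcal C)$, so $\mathbf m_2'$ is a partner of $\mathbf m_1$ realizing $d_2(\mathcal C)$ with $X^{\mathbf m_2'} \prec X^{\mathbf m_2}$, contradicting the choice of $\mathbf m_2$.

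With this mechanism in hand, statement (a) is immediate: a minimal generator $X^{K_1}$ properly dividing $X^{J_1}$ has $K_1 \subsetneq J_1 \subseteq J\setminus I$, so $K_1\cap I = \emptyset$ and the reduction argument applies verbatim. Along the way I will record two structural consequences of the minimality of $\mathbf m_2$ that keep the configuration in good position: first, $I \not\subseteq J$ (otherwise $\mathbf m_1 + \mathbf m_2$, of support $J\setminus I \subsetneq J$, would be a strictly $\prec$-smaller partner of $\mathbf m_1$); and second, $\mathbf m_2$ has minimal support, i.e.\ $J$ is a circuit (otherwise a proper subcodeword would again yield a smaller partner), which is what makes it possible at all for $J$ to be the support of a Gr\"obner basis binomial.

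The delicate point, and the step I expect to be the main obstacle, is statement (b). There a minimal generator $X^{K_1}$ may divide the tail $X^{J_2}$ with $K_1 \subseteq J_2$, and since $J_2 \supseteq I\cap J$ one can no longer guarantee $K_1 \cap I = \emptyset$. When $K_1$ meets $I\cap J$ the clean count breaks down: deleting $K_1\cap I$ from $J$ does not shrink $I\cup J$, while the replacement part $K_2$ may introduce elements outside $I\cup J$, so $|I\cup J'|$ can exceed $d_2(\mathcal C)$ and $\mathbf m_2'$ need not realize $d_2(\mathcal C)$. To handle this case I plan to invoke Lemma~\ref{SymetricDifLemma} to trade down the overlap between $K$ and $I$ (replacing $\mathbf c_K$ by $\mathbf c_K + \mathbf m_1$ precisely when $|K\cap I| > |K|/2$, which preserves the relevant union while strictly decreasing the support), combined with the global minimality of $\mathbf m_1$ in $\mathbf M$, to force a contradiction. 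Controlling this overlap, together with the degenerate subcase $K_1 = J_2$, is the genuine heart of the argument; the boundary situation $|I| = |J|$ is exactly the one already settled by the corollary above, where the difficulty disappears.
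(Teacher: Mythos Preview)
Your adaptive splitting is where the proposal breaks. You demand $J=J_1\sqcup J_2$ balanced with $X^{J_1}\succ X^{J_2}$ \emph{and} $I\cap J\subseteq J_2$, but which half of a balanced partition carries the leading monomial is dictated by $\prec$, not by you. For instance, take $\prec$ to be deglex with $x_1\succ\cdots\succ x_6$, $J=\{1,\dots,6\}$ and $I\cap J=\{1,2\}$: every balanced $J_2\supseteq\{1,2\}$ contains $x_1$, so $X^{J_2}\succ X^{J_1}$ and your setup is impossible. (This configuration is consistent with Remark~\ref{remark1} as soon as $4\le|I|<|J|$.) The paper resolves this by placing $I\cap J$ on a \emph{fixed} side of the split and then case-splitting on which side is leading; your argument is essentially only one of those two cases (the one where the $I\cap J$-side is the tail), and the complementary case still has to be handled.

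Even within the case where your setup holds, part (b) is only a plan, and the paper's route there is different and shorter than the Lemma~\ref{SymetricDifLemma} trade-down you sketch. Instead of picking an arbitrary Gr\"obner divisor $X^{K_1}$ of $X^{J_2}$ and wrestling with $K_1\cap I$, one takes a binomial $X^{J_2}-X^{L}\in I(\mathcal C)$ with $X^{L}\prec X^{J_2}$ and uses that all of $I\cap J$ already sits inside $J_2$: then $I\cup\mathrm{supp}(X^{J_2}-X^{L})\subseteq I\cup(J_2\setminus I)\cup L$, and since $|L|\le|J_2|\le|J_1|$ this has size at most $|I|+|J_2\setminus I|+|J_1|=d_2(\mathcal C)$, yielding a $\prec$-smaller partner of $\mathbf m_1$ and the desired contradiction. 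In the complementary case (the $I\cap J$-side is leading), the paper proves minimality of that leading monomial by noting $K_1\subseteq(I\cap J)\cup J_1\subseteq I\cup J_1$, so $|I\cup K|\le|I|+|J_1|+|K_2|$ and the count still closes even though $K_1$ may meet $I$; the tail is then identified as $J_2$ by looking at the unique reduced Gr\"obner element with that leading term.
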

 
 \begin{proof} By Remark \ref{remark1} we know that $|I \cap J| \leq |J|/2$, then one may consider $$g=X^{I\cap J}X^{J_1}-X^{J_2} \in I(\mathcal C ), $$
 such that $J_1\cup J_2=J-I$, $J_1\cap J_2=\emptyset$ and $|J_2| + 1 \geq |I \cap J| + |J_1| \geq |J_2|$. Our goal is to prove that $g$ (or $-g$) is in $\mathcal G_{\prec}$. We split the proof in two:
 \begin{description}
\item[Case ${\rm LT}_{\prec}(g)= X^{I\cap J}X^{J_1}$.] We are going to see that $g \in \mathcal G_{\prec}$.    
First, we show that $X^{I\cap J}X^{J_1}$ is a minimal generator of ${\rm in}_\prec (I(\mathcal{C}))$. By contradiction, suppose that there is a binomial $h=X^{K_1}-X^{K_2}\in \mathcal G_\prec$ with 
$\mathrm{LT}_{\prec}(h) = X^{K_1}$ such that $K_1 \subsetneq (I\cap J)\cup J_1$, and denote $K=K_1 \sqcup K_2$.
We have that $|J_2| \geq |I \cap J| + |J_1| - 1 \geq |K_1| \geq |K_2|$ and, in particular, $|J| > |K|$. 

%
%
%
%

However, 
\[ \begin{array}{lcl}
d_2(\mathcal C) & = &  |I\cup J|=|I|+|J|-|I\cap J|\\
 & = & |I|+ |J_1|+|J_2| \\
 & \geq & |I|+|J_1|+|K_2|\\
 & \geq & |I\cup K|, \end{array}\]
which cannot happen by the choice of $J = {\rm supp}(\mathbf m_2)$. Therefore, $X^{I \cap J} X^{J_1}$ is a minimal generator of ${\rm in}_\prec (I(\mathcal{C}))$. 
 
 Now, we will show that $X^{J_2}\notin {\rm in}_{\prec}(I(\mathcal{C}))$. 
 Since $X^{I\cap J}X^{J_1}$ is a minimal generator of ${\rm in}_{\prec}(I(\mathcal{C}))$, then there exists $h=X^{I\cap J}X^{J_1}-X^{L}\in \mathcal G_{\prec}$. We have that  $X^L \notin {\rm in}_{\prec}(I(\mathcal{C}))$ and let us see that $L = J_2$. Suppose that $L \neq J_2$, then  $0 \neq h - g = x^{J_2} - x^L \in I(\mathcal C)$ and $X^{J_2} \succ X^L$, because $X^L \notin {\rm in}_{\prec}(I(\mathcal{C})).$ However, \[ |I \cup {\rm supp}(h)| \leq |I| + |J_1| + |L| \leq |I| + |J_1| + |J_2| = |I \cup J| = d_2(\mathcal C),\] which is a contradiction.

\item[Case ${\rm LT}_{\prec}(g)=  X^{J_2}$.] We are going to see that $-g \in \mathcal G_{\prec}$.  First, we show that $X^{J_2}$ is a minimal generator of ${\rm in}_\prec (I(\mathcal{C}))$. Suppose that there exists a binomial $h=X^{K_1}-X^{K_2} \in\mathcal G_\prec$ with $X^{K_1} \succ X^{K_2}$ and $K_1 \subsetneq J_2$. Consider $L := J\triangle K$, we have that $L \subseteq (J - K_1) \cup K_2$ and, hence, $X^L \prec X^{J - K_1} X^{K_2} \prec X^J$. However, \[ |I \cup L| \leq |I \cup (J - K_1) \cup K_2| \leq |I \cup J| - |K_1| + |K_2| \leq |I \cup J| = d_2(\mathcal C),\] and again it is
a contradiction.
\end{description}
 
 We will show now that $X^{I\cap J} X^{J_1}\notin \mathrm{in}_\prec (I(\mathcal{C}))$, by contradiction. Suppose that $X^{I\cap J} X^{J_1}\in \mathrm{in}_\prec (I(\mathcal{C}))$, then there exists a binomial $X^{I\cap J} X^{J_1}- X^{L} $ in $I(\mathcal{C})$. Consider now $K=(I\cap J)\cup J_1\cup L$, again one can compute that $|I\cup J|\geq |I\cup K|$ and that $K\prec J$, which again contradicts the choice of $\mathbf m_2$.

 \qed
 \end{proof}
 From the above results the main theorems of this paper will follow.
 
 \begin{theorem}\label{Thmdeglex}
Let $\prec$ a degree compatible order  
in $\mathbb K[X]$, then there are $f, g$ in the reduced Gr\"obner basis $\mathcal G_\prec$ of $I(\mathcal C)$ with respect to $\prec$, such that $d_2=|{\rm supp}(f) \cup {\rm supp}(g)|$. 
\end{theorem}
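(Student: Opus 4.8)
The plan is to assemble the statement directly from Propositions \ref{Prop1} and \ref{Prop2}, which already supply the two Gr\"obner basis elements we need, together with one elementary observation about supports in the binary setting. First I would invoke Proposition \ref{Prop1} to obtain a binomial $f \in \mathcal G_\prec$ with $\mathrm{supp}(f) = I = \mathrm{supp}(\mathbf m_1)$, and Proposition \ref{Prop2} to obtain a binomial $g \in \mathcal G_\prec$ with $\mathrm{supp}(g) = J = \mathrm{supp}(\mathbf m_2)$. This immediately produces the desired pair $f, g$ inside the reduced Gr\"obner basis, so the only work left is to identify $|\mathrm{supp}(f) \cup \mathrm{supp}(g)| = |I \cup J|$ with $d_2(\mathcal C)$.

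For that identification I would use that $\mathcal C$ is binary. The two-dimensional subspace spanned by $\mathbf m_1$ and $\mathbf m_2$ is $\langle \mathbf m_1, \mathbf m_2 \rangle = \{\mathbf 0, \mathbf m_1, \mathbf m_2, \mathbf m_1 + \mathbf m_2\}$, and over $\mathbb F_2$ one always has $\mathrm{supp}(\mathbf m_1 + \mathbf m_2) \subseteq \mathrm{supp}(\mathbf m_1) \cup \mathrm{supp}(\mathbf m_2)$, since cancellation can only remove coordinates. Hence the support of the whole subspace collapses to exactly $I \cup J$. By the defining choice of $\mathbf m_1$ and $\mathbf m_2$ in \eqref{eq:minimal}, this subspace realizes the second generalized Hamming weight, so $d_2(\mathcal C) = |\mathrm{supp}(\langle \mathbf m_1, \mathbf m_2\rangle)| = |I \cup J| = |\mathrm{supp}(f) \cup \mathrm{supp}(g)|$, which is precisely the claimed equality.

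In this decomposition the real difficulty does not lie in the present theorem but was already absorbed into Propositions \ref{Prop1} and \ref{Prop2}: namely the delicate minimality arguments showing that the specific monomials $X^{I_1}$, $X^{I_2}$, $X^{I \cap J}X^{J_1}$ and $X^{J_2}$ cannot be strictly divided by a leading term of another basis element, since any such division would yield a strictly smaller cycle whose union with $I$ would beat $d_2(\mathcal C)$, contradicting the choice of $\mathbf m_2$. Once those two propositions are in hand, the theorem is a one-line consequence, so I expect no genuine obstacle at this final stage beyond carefully recording the binary-support observation and the fact that the union of the two supports is what the pair $(f,g)$ contributes.
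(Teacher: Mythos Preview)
Your proposal is correct and matches the paper's approach exactly: the paper simply states that the theorem follows from Propositions~\ref{Prop1} and~\ref{Prop2}, and your write-up supplies precisely those missing sentences. The only minor remark is that the identity $\mathrm{supp}(\langle \mathbf m_1,\mathbf m_2\rangle)=I\cup J$ holds over any field (not just $\mathbb F_2$), so the binary observation, while correct, is not strictly needed.
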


\begin{example}\label{ex:ejth2} Let $\mathcal C$ be the binary non-degenerate $[6,3]$-code with generator matrix
\[G = \left(
\begin{array}{cccccccccccccccccccccc}
1 & 0 & 0 & 1 & 1 & 0 \\
0 & 1 & 0 & 1 & 0 & 1 \\
0 & 0 & 1 & 0 & 1 & 1
\end{array} \right) \in \mathbb F_2^{\,3 \times 6}\]

One can check that every codeword different from $0$ is a codeword of minimal support and, hence, there are $7$ codewords of minimal support, namely $w_1 = 111000, w_2 = 110011, w_3 = 101101, w_4 = 100110, w_5 = 101110, w_6 = 010101, w_7 = 001011$. Moreover, we have that 
$d_1(\mathcal C) = 3$ and $d_2(\mathcal C) = 5$ and $d_3(\mathcal C) = 6$ and the set $\mathbf M$ described in (\ref{eq:minimal}) coincides with $\mathcal C - \{0\}$.

Consider the degree reverse lexicographic order $\prec_1$ with $x_6 \prec_1 \cdots \prec_1 x_1$, then \begin{itemize} \item $\mathbf m_1 := \min_{\prec_1}(\mathbf M) = 001011 = w_7$, and  \item $\mathbf m_2 := {\rm min}_{\prec_1}\{\mathbf m \in \mathbf M \, \vert \, d_2(\mathcal C) = \mathrm{supp}(\left \langle \mathbf m_1, \mathbf m \right)\rangle\} = 010101 = w_6.$ \end{itemize} The reduced Gr\"obner basis $\mathcal G_1$ of $I(\mathcal C)$ with respect to $\prec_1$ has 20 elements. As proved in Proposition \ref{Prop1}, the binomial $f = x_5 x_6 - x_3$ belongs to $G_1$ and has ${\rm supp}(f) = {\rm supp}(w_7) = \{3,5,6\}$ and, by Proposition \ref{Prop2}, the binomial $g = x_4 x_6 - x_2$ belongs to $\mathcal G_1$ and has  ${\rm supp}(g) = {\rm supp}(w_6) = \{2,4,6\}$. Moreover, $d_2 = |{\rm supp}(f) \cup {\rm supp}(g)| = |{\rm supp}(\langle w_6,w_7\rangle)| = |\{2,3,4,5,6\}| = 5$.

If we consider the degree reverse lexicographic order $\prec_2$ with $x_1 \prec_2 \cdots \prec_2 x_6$, then \begin{itemize} \item $\mathbf m_1' := \min_{\prec_2}(\mathbf M) = 111000 = w_1$, and  \item $\mathbf m_2' := {\rm min}_{\prec_2}\{\mathbf m \in \mathbf M \, \vert \, d_2(\mathcal C) = \mathrm{supp}(\left \langle \mathbf m_1', \mathbf m \right)\rangle\} = 100110 = w_4.$ \end{itemize} The reduced Gr\"obner basis $\mathcal G_2$ of $I(\mathcal C)$ with respect to $\prec_2$ has 20 elements. As proved in Proposition \ref{Prop1}, the binomial $f' = x_1 x_2 - x_3$ belongs to $ \mathcal G_2$ and has ${\rm supp}(f) = {\rm supp}(w_1) = \{1,2,3\}$ and, as proved in Proposition \ref{Prop2}, the binomial $g' = x_1 x_4 - x_5$ belongs to $\mathcal G_2$ and has  ${\rm supp}(g) = {\rm supp}(w_4) = \{1,4,5\}$. Moreover, $d_2 = |{\rm supp}(f') \cup {\rm supp}(g')| = |{\rm supp}(\langle w_1,w_4\rangle)| = |\{1,2,3,4,5\}| = 5$.

\end{example} 
As a consequence, we have that the two first GHWs $d_1(\mathcal C)$ and $d_2(\mathcal C)$ can be obtained from the  minimal graded free resolution  associated with the supports in the $\mathcal G_\prec$-test set.  
Moreover, from this resolution one  can also obtain upper bounds for all the  $d_i(\mathcal C)$ where  $i \in \left[ k(\mathcal C) \right]$. More precisely, we have the following.

\begin{theorem}\label{teo:gordo} Let $\mathcal C \subseteq \mathbb F_2^n$ be a binary code, $\prec$ a degree compatible monomial order in $R$. Let $T$ denote the $\mathcal G_\prec$-test, define the square-free monomial ideal \[ M := \langle \{ X^{\mathbf c} \, \vert \, \mathbf c \in T \} \rangle  \subseteq R,\] and consider  
 a minimal graded free resolution of $R/M$: 
\[ 0 \longrightarrow F_p \longrightarrow \cdots \longrightarrow  F_2  {\longrightarrow}  F_1 \longrightarrow R \longrightarrow R/M \longrightarrow 0,  \]
where each $F_i$ is a graded free $R$-module of the form
$$F_i = \bigoplus_{j \in \mathbb N} R(-j)^{\beta_{i,j}(R/M)} \hbox{ for } i \in \left[p\right]_0.$$
Then, 
\begin{itemize}
\item[{\rm (a)}] $p \leq k(\mathcal C)$
\item[{\rm (b)}] $d_i(\mathcal C) \leq {\rm min}\{j \, \vert \, \beta_{i,j}(R/M) \neq 0\}$ for all $j \in \{3,\ldots,p\}$, and 
\item[{\rm (c)}] $d_i(\mathcal C) = {\rm min}\{j \, \vert \, \beta_{i,j}(R/M) \neq 0\},$ for $i = 1, 2$.
\end{itemize}
\end{theorem}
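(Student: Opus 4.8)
The plan is to isolate a single multigraded statement $(\star)$ from which (a), (b) and the $\leq$-halves of (c) all follow at once, and then to supply the two achievability arguments needed for the equalities in (c).

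\textbf{Setup and reductions.} Since the $\mathcal G_\prec$-test set satisfies $T\subseteq\mathcal M_{\mathcal C}$ (Proposition~\ref{pr:BBFM}), we have $M\subseteq I_\Delta$; moreover distinct codewords of minimal support have distinct, pairwise incomparable supports, so the monomials $\{X^{\mathbf c}\mid\mathbf c\in T\}$ are precisely the minimal generators of $M$. For $\sigma\subseteq[n]$ put $\mathcal C(\sigma)=\{\mathbf c\in\mathcal C\mid\mathrm{supp}(\mathbf c)\subseteq\sigma\}$; then $\dim_{\mathbb F_2}\mathcal C(\sigma)$ equals the nullity $|\sigma|-r(\sigma)$ of $\sigma$ in $\mathcal M$, and $d_i(\mathcal C)=\min\{|\sigma|\mid\dim\mathcal C(\sigma)\geq i\}$. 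The master claim is
$$(\star)\qquad \beta_{i,\sigma}(R/M)\neq 0 \ \Longrightarrow\ \dim\mathcal C(\sigma)\geq i,\quad\text{i.e. } d_i(\mathcal C)\leq|\sigma|.$$
Granting $(\star)$: taking $i=p$ and any $\sigma$ with $\beta_{p,\sigma}\neq0$ gives $p\leq\dim\mathcal C(\sigma)\leq k(\mathcal C)$, which is (a); applying it to the least $j$ with $\beta_{i,j}(R/M)\neq0$ yields the inequalities (b) and the bound $d_i\leq\min\{j\mid\beta_{i,j}\neq0\}$ for $i=1,2$ in (c).

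\textbf{Proof of $(\star)$.} I would read $\beta_{i,\sigma}(R/M)$ off the lcm-lattice $L_M$, whose elements are the unions $\bigcup_{\mathbf c\in S}\mathrm{supp}(\mathbf c)$ for $S\subseteq T$; by the Gasharov--Peeva--Welker theorem $\beta_{i,\sigma}(R/M)=\dim_{\mathbb K}\tilde H_{i-2}((\hat 0,\sigma);\mathbb K)$, the reduced homology of the open interval in $L_M$ below $\sigma$ (equivalently, one may use Hochster's formula for $\tilde H_{|\sigma|-i-1}(\Delta_M|_\sigma)$). A nonzero value forces a chain of $i-1$ elements strictly between $\hat 0$ and $\sigma$, hence a strictly increasing chain $\emptyset=\tau_0\subsetneq\tau_1\subsetneq\cdots\subsetneq\tau_i=\sigma$ of unions of test-circuits. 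At each step choose $\mathbf c^{(m)}\in T$ with $\mathrm{supp}(\mathbf c^{(m)})\subseteq\tau_m$ and $\mathrm{supp}(\mathbf c^{(m)})\not\subseteq\tau_{m-1}$, possible because $\tau_m$ is a union of test-circuits properly containing $\tau_{m-1}$. These $i$ codewords are $\mathbb F_2$-linearly independent: if $\sum_{m\in S}\mathbf c^{(m)}=\mathbf 0$ with $S\neq\emptyset$ and $m^\ast=\max S$, then $\mathbf c^{(m^\ast)}$ would be supported on $\tau_{m^\ast-1}$, contradicting its choice. As all of them are supported in $\sigma$, the subcode $\langle\mathbf c^{(1)},\dots,\mathbf c^{(i)}\rangle$ has dimension $i$ and support inside $\sigma$, giving $(\star)$.

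\textbf{Reverse inequalities for (c).} For $i=1$ the minimal generators of $M$ are the $X^{\mathbf c}$, $\mathbf c\in T$, so $\min\{j\mid\beta_{1,j}(R/M)\neq0\}$ is the least weight occurring in $T$, which is $d_1(\mathcal C)$ by Proposition~\ref{pr:BBFM}. For $i=2$, Theorem~\ref{Thmdeglex} gives $f,g\in\mathcal G_\prec$, hence codewords of $T$, with $\sigma_0:=\mathrm{supp}(f)\cup\mathrm{supp}(g)$ of cardinality $d_2(\mathcal C)$. Since $d_3>d_2$ we must have $\dim\mathcal C(\sigma_0)=2$, so the only test-circuits contained in $\sigma_0$ are among the three nonzero words of this two-dimensional subcode, whose supports pairwise union to $\sigma_0$; in particular $\mathrm{supp}(f)$ and $\mathrm{supp}(g)$ are distinct and incomparable. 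Hence the open interval $(\hat 0,\sigma_0)$ in $L_M$ is an antichain of at least two elements (each single test-circuit is proper, and any two of them already have least common multiple $X^{\sigma_0}$), so $\tilde H_0((\hat 0,\sigma_0))\neq0$ and $\beta_{2,\sigma_0}(R/M)\neq0$. Thus $\min\{j\mid\beta_{2,j}(R/M)\neq0\}\leq d_2(\mathcal C)$, and with $(\star)$ we obtain equality.

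\textbf{Main obstacle.} The crux is $(\star)$, and within it the passage from a nonvanishing multigraded Betti number to a genuine strictly increasing chain of test-circuit unions; once the chain is in hand, the extraction of linearly independent codewords via the ``new coordinate at each step'' argument is forced and is special to $\mathbb F_2$. The second delicate point is the $i=2$ achievability, which relies essentially on $\dim\mathcal C(\sigma_0)$ being \emph{exactly} $2$ (forced by $d_3>d_2$); this is precisely the feature with no analogue for $i\geq3$, and it explains why the conjecture of Section~\ref{sec:conjfalse} breaks down there, as witnessed by Example~\ref{ex:contraej}.
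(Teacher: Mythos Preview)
Your proof is correct, but it follows a genuinely different route from the paper's.

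For (a) and (b), the paper simply observes that the generators of $M$ form a subset of the minimal generators of $I_\Delta$ and then appeals to Johnsen--Verdure (Theorem~\ref{Verdure-Teo}); no further detail is given. You instead prove the self-contained multigraded statement $(\star)$ via the Gasharov--Peeva--Welker description of $\beta_{i,\sigma}(R/M)$ in terms of the lcm-lattice: a nonvanishing $\tilde H_{i-2}$ forces an $(i-2)$-face in the order complex, hence a chain $\hat 0=\tau_0\subsetneq\cdots\subsetneq\tau_i=\sigma$ of unions of test-set supports, from which you extract $i$ independent codewords by the ``new support coordinate at each step'' argument. This is more work, but it makes the implication transparent and avoids relying on the matroidal Cohen--Macaulay machinery behind Theorem~\ref{Verdure-Teo}. (Incidentally, your linear-independence step is not special to $\mathbb F_2$; the same triangular-support argument works over any field.)

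For (c) at $i=2$, the paper argues with the Taylor resolution: since $\{X^{\mathbf c}\mid\mathbf c\in T\}$ is already minimal, $F_1'=F_1$, and pruning the Taylor complex shows that the smallest second shift equals $\min_{i<j}|\mathrm{supp}\langle\mathbf c_i,\mathbf c_j\rangle|$, which is $d_2$ by Theorem~\ref{Thmdeglex}. You instead locate the specific multidegree $\sigma_0=\mathrm{supp}(f)\cup\mathrm{supp}(g)$ and use $d_3>d_2$ to force $\dim\mathcal C(\sigma_0)=2$; this guarantees that the open interval $(\hat 0,\sigma_0)$ in $L_M$ is an antichain of at least two (minimal, incomparable) supports, so $\tilde H_0\neq 0$ and $\beta_{2,\sigma_0}(R/M)\neq 0$. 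The paper's Taylor argument is more elementary (no GPW needed); your argument is cleaner in that it reuses the same lcm-lattice viewpoint as $(\star)$ and pinpoints exactly why the method cannot be pushed to $i\ge 3$.
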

\begin{proof} By Proposition \ref{pr:BBFM}, every $\mathbf c \in T$ is a codeword of minimal support and, hence, $\{X^{\mathbf c} \, \vert \, \mathbf c \in T\}$ is a subset of the generators of the monomial ideal supported on all the codewords of minimal support. Thus, (a) and (b) follow from  Theorem \ref{Verdure-Teo}.

The rest of the proof concerns (c). Again by Proposition \ref{pr:BBFM}, $\{X^{\mathbf c} \, \vert \, \mathbf c \in T\}$ is the minimal monomial generating set of $M$. Since $\beta_{1,i}(R/M)$ equals the number of minimal generators of $M$ of degree $i$, then, 
$\beta_{1,i}(R/M) = | \{\mathbf{c} \in T \, \vert \, \mathrm{w}_H(c) = i\}. |$ 
By Proposition \ref{pr:BBFM}, there is a $\mathbf c \in T$ such that $\mathrm{w}_H(\mathbf c) = d_1$ and, thus, \[d_1(\mathcal C) = {\rm min}\{ \mathrm{w}_H(\mathbf c) \, \vert \, \mathbf c \in T\} = {\rm min}\{i \, \vert \, \beta_{1,i}(R/M) \neq 0\}.\]

Assume now that $T = \{\mathbf c_1,\ldots,\mathbf c_r\},$ and consider $\mathcal T$ the Taylor resolution of $M = \langle X^{\mathbf c_1},\ldots,X^{\mathbf c_r} \rangle$. The first steps of this resolution are given by 
$$\mathcal T: \cdots \longrightarrow  F_2'  \overset{\varphi_2}{\longrightarrow}  F_1' \longrightarrow R \longrightarrow R/M \longrightarrow 0,$$
where $F_i' := \oplus_{|I| = i \atop I \subset [r]}  R (-|{\rm supp}\langle \mathbf c_j\, \vert \, j \in I \rangle|)$ and $F_1' := \oplus_{1 \leq i \leq r}  R (-|{\rm supp}(\mathbf c_i)|).$ 
Hence, the shifts in the second step of $\mathcal T$ are given by $|{\rm supp}\langle \mathbf c_i, \mathbf c_j \rangle|$ for $1 \leq i < j \leq r$ and, as a consequence, $d_2(\mathcal C) \leq {\rm min}\{|{\rm supp}\langle \mathbf c_i, \mathbf c_j \rangle|\, \vert \, 1  \leq i < j \leq r\}.$ Moreover, by Theorem \ref{Thmdeglex}, this is indeed an inequality. 

In general, the Taylor resolution is not minimal (it is usually very far from minimal). However, it can be pruned to get a minimal one. Consider now $$\mathcal F: \cdots \longrightarrow  F_2  \longrightarrow  F_1 \longrightarrow R \longrightarrow R/M \longrightarrow 0,$$
a minimal graded free resolution of $R/M$ obtained after pruning the Taylor one. As we said before, $\{X^{\mathbf c_1},\ldots,X^{\mathbf c_r}\}$ is the minimal monomial generating set of $M$ and, hence, $F_1' = F_1$. As a consequence, ${\rm min}\{i \, \vert \, \beta_{2,i}(R/M) \neq 0\} = {\rm min}\{|{\rm supp}\langle \mathbf c_i, \mathbf c_j \rangle| \, \vert \, 1 \leq i < j \leq r\} = d_2(\mathcal C).$
\qed
\end{proof}

\section{Final remarks, future work \& new conjectures}\label{sec:last}

 In this work, we build on the results of Borges-Quintana et al. \cite{BBFM:2008} and propose $\mathcal G_\prec$-test sets as a smaller structure from where one can obtain the values of $d_1(\mathcal C)$ and $d_2(\mathcal C)$ for binary codes.
Several experiments with SageMath \cite{sagemath} suggest that Theorem \ref{teo:gordo} can also be extended for $i = 3$. More precisely:

\begin{question}
Let $\mathcal C \subseteq \mathbb F_2^n$ be a binary code, $\prec$ a degree compatible monomial order in $R$. Consider the $\mathcal G_\prec$-test set $T$ and define the square-free monomial ideal \[ M := \langle \{ X^{\mathbf c} \, \vert \, \mathbf c \in T \} \rangle  \subseteq R.\] Is
$d_3(\mathcal C) = {\rm min}\{i \, \vert \, \beta_{3,i}(R/M) \neq 0\}$?
\end{question}

In Example \ref{ex:contraej} one has that the projective dimension (i.e., the number of steps of the resolution) of $R/M$ is ${\rm pd}(R/M) = 6$, while the dimension of $\mathcal C$ is $k(\mathcal C) = 7$. In all the counterexamples to the original conjecture that we have found it turns out that ${\rm pd}(R/M) < k(\mathcal C)$. This motivates us to ask if the conjecture holds provided that ${\rm pd}(R/M) = k(\mathcal C)$. More precisely:

\begin{question}
Whenever ${\rm pd}(R/M) = k(\mathcal C)$, is it true that \[ d_i(\mathcal C) = {\rm min}\{j \, \vert \, \beta_{i,j}(R/M) \neq 0\} \] for all $i \in \{1,\ldots,k(\mathcal C)\}$?
\end{question}

Also, the following natural questions arise.
\begin{question} What is in between the test set and the complete set of codewords of minimal support? i.e. Can we characterize a mid-way structure that provides the complete set of GHWs? 
\end{question} 

A possible candidate fot that intermediate set could be the union of all $\mathcal G_\prec$-test sets for all $\prec$ degree compatible orderings. In general, this set can be smaller than the whole set of codewords of minimal support { and can be computed by the algorithm proposed in \cite{Natalia}}. For example, for the $[7,4]$ binary Hamming code, i.e., the code with  generator matrix

\[ \left( \begin{matrix}
1 & 0 & 0 & 0 & 0 & 1 & 1 \\
0 & 1 & 0 & 0 & 1 & 0 & 1 \\
0 & 0 & 1 & 0 & 1 & 1 & 0 \\
0 & 0 & 0 & 1 & 1 & 1 & 1 \\
\end{matrix} \right) \in \mathbb F_2^{\,4 \times 7},
\]
has $14$ codewords of minimal support, half of them with  weight $3$, and the rest with   weight $4$. Moreover, the sequence of GHW is $(d_1(\mathcal C),\ldots,d_4(\mathcal C)) = (3,5,6,7)$. One has that all the $\mathcal G_\prec$-test sets when $\prec$ ranges over all degree compatible orderings consists of the $7$ codewords of Hamming weight $3$. If one computes the Betti diagram of the monomial ideal $M \subset \mathbb F_2[x_1,\ldots,x_7]$ corresponding to this set, one gets the following

$$
\begin{array}{c|ccccccc}
& 0 & 1 & 2 & 3 & 4  \\ \hline
0 & 1 & 0 & 0 & 0 & 0 \\
1 & 0 & 0 & 0 & 0 & 0\\
2 & 0 & 7 & 0 & 0 & 0 \\
3 & 0 & 0 & 21 & 21 & 6 \\
\end{array}
$$
Hence, the sequence 
$ ({\rm min}\{ j \, \vert \, \beta_{i,j}\neq 0 \}; 1 \leq i \leq 4) = (3,5,6,7)$
coincides with the sequence of GHWs of the code $\mathcal C$.

\begin{question} Can we say something in the non-binary case?  \end{question}
	In order to answer this question, one could try to apply the techniques in \cite{Irene}, where a generalization of the ideal $I(\mathcal C)$ for non-binary codes is studied. 	

	
	
\begin{question} Recently,  E. Gorla and A. Ravagnani in \cite{GR22} extend and generalize the results of Johnsen and Verdure \cite{JV:2013} to compute the generalized weights of a code with respect to a different notions of weight. It would be interesting to see if these generalized weights can be computed from a $\mathcal G_\prec$-test set of $\mathcal C$.  \end{question}

\section*{Acknowledgments}

We would like to thank E. Gorla (University of Neuchatel, Switzerland) for her valuable comments and suggestions.

%
%
%
\bibliographystyle{splncs04}
\bibliography{seconddistance}
\end{document}